\documentclass[12pt]{article}
\usepackage{dblfloatfix}
\usepackage{marginnote}
\usepackage{listings}
\usepackage{algorithm}
\usepackage{algpseudocode}
\usepackage[english]{babel}

\usepackage[letterpaper,top=2cm,bottom=2cm,left=3cm,right=3cm,marginparwidth=1.75cm]{geometry}

\usepackage[utf8]{inputenc}
\usepackage{mathrsfs}
\usepackage{dsfont}
\usepackage{amsfonts}
\usepackage{complexity}
\usepackage{latexsym}
\usepackage{amssymb}    
\usepackage{amsmath}
\usepackage{graphicx}
\usepackage[colorlinks=true, allcolors=blue]{hyperref}

	\newcommand{\bbF}{\mathbb{F}}

	\newcommand{\bbN}{\mathbb{N}}
	
\newcommand{\bbQ}{\mathbb{Q}}

\newcommand{\mcA}{\mathcal{A}}	\newcommand{\mcB}{\mathcal{B}}
\newcommand{\mcC}{\mathcal{C}}

\newcommand{\mcM}{\mathcal{M}}	
\newcommand{\mcO}{\mathcal{O}}

\newcommand{\mfI}{\mathfrak{I}}

	\newcommand{\wtL}{\widetilde{L}}

\newcommand{\eps}{\epsilon}

\newcommand{\ceil}[1]{\lceil #1 \rceil}

\newcommand{\F}{\mathbb{F}}

\newcommand{\rank}{\operatorname{rank}}

\newcommand{\Det}{\operatorname{\mathsf{Det}}}

\renewcommand{\P}{\mathsf{P}}

\renewcommand{\epsilon}{\varepsilon}

\renewcommand{\epsilon}{\varepsilon}
\newcommand{\ignore}[1]{}
\usepackage{cleveref}
\usepackage{babel}
\usepackage{amsthm}
\usepackage{thmtools,thm-restate}

\newtheorem{definition}{Definition}[section]
\newtheorem{theorem}{Theorem}[section]
\newtheorem{corollary}{Corollary}[theorem]
\newtheorem{lemma}[theorem]{Lemma}

\newtheorem{fact}[theorem]{Fact}
\newtheorem{claim}[theorem]{Claim}
\newtheorem{remark}[theorem]{Remark}

\title{Maximum Matching and Related Problems in Catalytic Logspace}
\author{Srijan Chakraborty, Samir Datta, Aryan Kusre, Partha Mukhopadhyay, \\Amit Sinhababu\\
Chennai Mathematical Institute, India\\
\texttt\{srijanc, sdatta, aryank, partham, amitks\}@cmi.ac.in}

\begin{document}
\maketitle

\abstract 
Understanding the power of space-bounded computation with access to catalytic space has been an important theme 
in complexity theory over the recent years. One of the key algorithmic results in this area is that bipartite maximum matching can be computed in catalytic logspace ($\CL$) with a polynomial-time bound ($\CL\P$) \cite{AM}. 

In this paper, we show that we can construct a \emph{maximum matching} in \emph{general graphs} in $\CL$, and, in fact, in $\CLP$. We first show that the size of a \emph{maximum matching} in \emph{general graphs} can be determined in $\CL$. Our algorithm is based on the linear-algebraic algorithm for maximum matching by Geelen \cite{Geelen2000}. We then show that this algorithm, along with some new ideas, can be used to \emph{find} a maximum matching in general graphs. Using a similar algorithm of Geelen \cite{Geelen1999}, we also solve the \emph{maximum rank completion problem} in $\CL\P$, which was previously known to be solvable in deterministic polynomial time \cite{Geelen1999}. This problem turns out to be equivalent to the \emph{linear matroid intersection} problem \cite{Murota1995Mixed} which has been shown to be in $\CLP$ by \cite{AAV26}. Finally, using a PTAS algorithm \cite{BlaserJindalPandey2018} for approximating the rank in Edmond's problem, we derive a $\CLP$ algorithm that can approximate the rank given by any instance of the \emph{Edmond's problem} upto a factor of $(1-\eps)$ for any $\eps\in(0,1)$. An application of this is a $\CLP$ bound for approximating the maximum independent matching size in the \emph{linear matroid matching} problem.


\section{Introduction}\label{sec:intro}
The study of space-bounded computation is a central theme in complexity theory, with a long line of work investigating the power and limitations of logarithmic-space algorithms. A recent direction in this area focuses on models that augment classical space bounds with auxiliary resources, such as catalytic space, where a large workspace is available but must be returned to its initial state at the end of the computation. This model, introduced by Buhrman, Cleve,  Kouck{\'y}, Loff, and Speelman\cite{BuhrmanCL2014}, has led to surprising algorithmic developments and a refined understanding of the role of reversibility and space reuse in computation.

A major goal in this line of research is to identify natural problems that can be solved efficiently within catalytic logspace ($\CL$), particularly with polynomial-time bounds ($\CL\P$). While several foundational results have established the basic properties of this model, progress on natural combinatorial problems has been more recent. Notably, the breakthrough result of \cite{AM} shows that bipartite maximum matching can be computed in catalytic logspace with polynomial time, providing one of the first nontrivial examples of a classical graph problem lying in $\CL\P$. Over the recent years, there have been many results surrounding catalytic computation such as \cite{BuhrmanCL2014,Pyne,CookLiMertzPyne2025,CGMPS,AAV26,ACD,Edenhofer,CDKMR}. A flowchart for various complexity classes would be: $\Log \subseteq \NL\subseteq\TC^1\subseteq\NC\subseteq\P\subseteq\ZPP$ where the only known bounds for catalytic logspace are $\TC^1\subseteq\CL\subseteq\ZPP$. 

Matching problems occupy a central position in combinatorial optimization and theoretical computer science. The maximum matching problem, in both bipartite and general graphs, has been extensively studied, with classical polynomial-time algorithms based on augmenting paths, combinatorial structures such as blossoms, and algebraic formulations \cite{Tutte1947,Edmonds1965,HopcroftKarp1973, Lovasz1979}. In particular, algebraic approaches based on the Tutte matrix and rank computations, such as those developed by Jim Geelen, provide a powerful framework for reasoning about matchings and their generalizations \cite{Geelen1999, Geelen2000}.

In this paper, we extend the reach of catalytic logspace algorithms for matching problems. We show that a maximum matching in general graphs can be computed in $\CL\P$, thereby generalizing the bipartite case. To show that bipartite matching is in $\CLP$, \cite{AM} uses isolating weights which they sample from the catalytic tape. They can solve the problem in one shot if a given weight assignment is isolating, and otherwise they find a threshold edge by finding appropriate augmenting paths and can free up space. This technique breaks down in general graphs, since  finding augmenting paths becomes much more difficult, as one might come across blossoms while doing so. To avoid these issues, we build on the algebraic techniques of Geelen \cite{Geelen2000} and adapt them to the catalytic setting, carefully managing space reuse. To search for a maximum matching, we again use isolating weights, but not in the conventional sense, in particular, we do not isolate a min-weight matching. As a further application, we show that the \emph{maximum rank matrix completion} problem can also be solved in $\CL\P$ \cite{Geelen1999}. This problem, previously known to admit deterministic polynomial-time algorithms, captures a common generalization of linear matroid intersection and bipartite matching \cite{AAV26,AM}, and thus serves as a unifying framework for these problems. 

Both maximum matching, and the matrix completion problems are special cases of \emph{symbolic determinant identity testing} (SDIT), which is also known as \emph{Edmonds' problem} \cite{Edmonds1965,Lovasz1979}. Given a matrix 
$A=\sum_{i\in[m]} A_i x_i$, the SDIT problem asks for the rank of $A$. There is a simple randomized polynomial-time algorithm for this problem: one can obtain a matrix of maximum rank by substituting random values for $x_1, x_2, \ldots, x_m$ from a sufficiently large set from the field $\F$, using the Polynomial Identity Testing lemma \cite{Zippel1979,Schwartz1980,DeMilloLipton1978}.

In \cite{BlaserJindalPandey2018}, Bl\"{a}ser, Jindal, and Pandey study this problem of computing the rank of the symbolic matrix $A$, and are able to get a PTAS for the same. However, exact computation of the commutative rank of a given matrix space in deterministic polynomial time is the main problem of symbolic identity testing (SDIT) which is directly related to lower bounds \cite{KabanetsImpagliazzo2004}. 

Extending the ideas in \cite{BlaserJindalPandey2018}, we derive a $\CLP$ algorithm that can approximate the maximum rank of a matrix in any matrix space upto a factor of $(1-\eps)$ for any $\eps\in(0,1)$. An interesting application of this is a $\CLP$ algorithm for approximating the size of a maximum independent matching in the linear matroid matching problem.

Our results provide new evidence for the power of catalytic space in enabling efficient algorithms for algebraic and combinatorial problems. More broadly, they suggest that linear-algebraic techniques may play a fundamental role in understanding the capabilities of space-bounded computation with reversible auxiliary memory. This also gives a $\CL$ bound for a couple of problems, that we do not yet know how to solve in $\NC$, namely maximum matching size in general graphs, matrix rank completion, $(1-\eps)$ approximating Edmond's rank. Moreover $\CLP$ also turns out to be the first well studied subclass of $\P$ where we can find a maximum matching in general graphs.
\subsection*{Our results}

The maximum matching problem in general graphs admits a well-known algebraic formulation via the Tutte matrix \cite{Tutte1947}. Given a graph $G$, the Tutte matrix is a symbolic skew-symmetric matrix in which each edge $(i,j)$ is associated with an indeterminate $x_{i,j}$ (with a sign), and non-edges correspond to zero entries.

The fundamental connection shows that the rank of this matrix (over a suitable field) is twice the size of a maximum matching in $G$. This characterization enables the use of linear-algebraic techniques for computing matchings, including randomized algorithms based on substituting indeterminates with field elements. In particular, combining this approach with fast parallel algorithms for matrix rank yields an $\RNC$ upper bound for maximum matching in general graphs \cite{MulmuleyVaziraniVazirani1987}, placing the problem among the central examples of efficiently parallelizable combinatorial problems. From the perspective of derandomization and pinning down the parallel complexity, putting matching in NC has been an outstanding question.

The last decade saw advances in understanding the parallel complexity of matching. In particular, building on the breakthrough quasi-$\NC$ algorithm for bipartite matching due to Fenner, Gurjar, and Thierauf \cite{FennerGurjarThierauf2016}, follow-up work by Svensson and Tarnawski \cite{SvenssonTarnawskiVassilevskaWilliams2017} obtained quasi-$\NC$ algorithms for maximum matching in general graphs.

Despite these advances, a $\CL$ algorithm to find a perfect matching in general graphs has remained elusive. We settle this question by proving the following theorem. 

\begin{theorem}\label{thm:main1}
 Maximum matchings in a general graph $G$ can be constructed in $\CLP$. 
\end{theorem}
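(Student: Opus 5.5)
We plan to follow the two-stage route sketched in the introduction. The first stage computes the size $\nu(G)$ of a maximum matching in $\CLP$, using Geelen's deterministic algorithm for the rank of the Tutte matrix \cite{Geelen2000}. The second stage reduces construction to repeated size computations while keeping the working space logarithmic.

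\textbf{Stage 1 (size).} Geelen's algorithm maintains a substitution $T(c)$ of the Tutte matrix $T$ by field elements from a small set (of size polynomial in $n$). It improves the rank of $T(c)$ by local changes, one indeterminate at a time, each time testing whether the rank increases; at a local optimum the rank equals $\rank T = 2\nu(G)$. Each step is a rank computation over a finite field or $\bbQ$, which lies in $\DET \subseteq \TC^1 \subseteq \CL$. The difficulty is that the current substitution $c$ has $O(n^2\log n)$ bits, too many for the logspace worktape. We will store $c$ on the catalytic tape in the standard way: XOR $c$ into a block of the catalytic tape, run the polynomially many improvement steps, and then undo them by recomputing in reverse. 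Each improvement step is a deterministic, invertible update of the block, since the step can be recorded by a $O(\log n)$-bit index of which entry changed and to what. Because the number of improvements is bounded by $n$ (the rank increases each time) times polynomially many local moves, the running time is polynomial.

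\textbf{Stage 2 (construction).} The naive self-reduction deletes an edge $e$ whenever $\nu(G-e)=\nu(G)$. It needs to remember the current edge subset, which is $m$ bits, again too large. Following the remark in the introduction, we would instead sample weights $w$ from the catalytic tape and use them to define a canonical maximum matching that can be tested edge by edge without memory. Concretely, we define $M_w$ by processing the edges in $w$-order and keeping $e$ iff $\nu(G_{\le e}\text{ restricted appropriately}) $ increases. Equivalently, $e\in M_w$ iff $\nu$ of the graph with $e$ forced exceeds the threshold obtained by a prefix computation. This is done by computing rank of a weighted Tutte matrix $T(y^{w})$ in a degree-bounded variable, where the lowest-degree term selects the matching. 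Where the weights are \emph{good} (the relevant coefficient is nonzero at a unique minimal configuration), each edge membership is decided by one Stage-1 call on a modified graph (delete the endpoints of $e$ and check the size drops by exactly one, consistently with weights), all in $\CL$. Hence the output tape can be written edge by edge.

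\textbf{Main obstacle.} The main obstacle is the case where the sampled weights are bad, i.e.\ the edge-by-edge tests do not produce a consistent matching of size $\nu(G)$. Here we would mimic \cite{AM}: we detect the failure by verifying the output set is a matching of size $\nu(G)$, which is doable in logspace given the tests. From a witness of failure (two distinct minimum configurations differing at a specific edge, a ``threshold edge'') we compress the catalytic tape, since that part of the weight vector is determined by the rest plus $O(\log n)$ bits. This frees space to continue, and it is a compress-or-random argument that yields polynomial expected time in the worst case deterministically. The delicate part is that for general graphs the witness must come from the algebraic structure (two distinct terms of the Pfaffian with equal weight) rather than augmenting paths. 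We would try to extract it by locating, via Stage 1 calls, the first edge where the sequential greedy decisions disagree with the weight-minimum, and then arguing that this edge's weight is recoverable from the others.
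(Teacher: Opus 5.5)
There are genuine gaps in both stages: the construction step is missing its key lemma, and the size step uses a mechanism that does not work.

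\textbf{Stage 1.} After XOR-ing $c$ into a block, the tape holds $\tau\oplus c$ for an unknown, arbitrary $\tau$. You therefore cannot read $c$ to compute the ranks that drive Geelen's next move. XOR-style transparent tricks work for fixed arithmetic programs such as the determinant, not for a data-dependent local search. Suppose instead you take $\tau$ itself as the starting substitution. The improvement map is still not injective, and undoing a move needs the \emph{old} value of the changed entry, not the new one. Recording an index and an old value for each of polynomially many moves exceeds the $O(\log n)$ work tape.

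The paper never runs the search on the tape. It only tests whether a catalytic block is already a local optimum in Geelen's sense. If it is not, \Cref{2A,2B} apply: the relevant Pfaffian is linear in $x_i$, so $a_i$ is the unique value that does not raise the rank (respectively, does not put $u$ into $D$). Hence $a_i$ can be replaced by $(i,k)$ or $(i,u)$, freeing $\Omega(\log n)$ bits per failed block. After $n^3$ failures there is enough genuinely free space to run Geelen outright. This also yields the full-rank evaluation $T'$ of \Cref{trank}, which the construction needs and your Stage 1 never produces.

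\textbf{Stage 2.} You name the obstacle but not the idea that overcomes it, and your concrete tests fail.
\begin{itemize}
\item Keeping $e$ whenever $\nu(G_{\le e})$ increases yields $\nu(G)$ edges that need not form a matching. On the path $a,b,c,d$ with edges processed in the order $bc,ab,cd$, it keeps $bc$ and $cd$.
\item Deleting the endpoints of $e$ and checking that $\nu$ drops by one only certifies that $e$ lies in \emph{some} maximum matching. Since $\nu$ ignores weights, ``consistently with weights'' adds nothing.
\item Substituting $y^{w(e)}$ alone into $T$ can make the Pfaffian vanish identically, e.g.\ on a $4$-cycle with equal weights.
\item Your compression step presupposes conventional isolation and a computable minimum weight. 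Pfaffian terms can cancel, so you have no quantity from which a threshold weight is recovered.
\end{itemize}

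What is missing is \Cref{unique}. Take $T'$ of full rank, set $A=T'\circ W$, and let $w_0$ be the least degree of $\operatorname{Pf}(A)$; this may exceed the minimum matching weight. Write $\operatorname{Pf}(A_e)=P_0+yP_1$, which is independent of $W(e)$. Then exactly one of two things happens:
\begin{itemize}
\item Some edge has $c^e_{w_0}\neq 0$ and $c^{\bar e}_{w_0}\neq 0$. This forces $W(e)=w_{\bar e}-w_e$, so $W(e)$ can be erased and later recomputed.
\item Otherwise $\{e: c^{\bar e}_{w_0}=0\}$ is a perfect matching, by a counting argument at each vertex. Membership of each edge is decidable in $\CL$ via interpolation.
\end{itemize}
This lemma, together with the dummy-vertex reduction from maximum to perfect matching (\Cref{thm:smm}), proves the theorem. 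Your proposal does not supply it.
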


The maximum rank matrix completion problem asks, given a matrix with some entries fixed and others unspecified, to assign values to the unspecified entries so as to maximize the rank of the resulting matrix. This problem admits a natural algebraic interpretation and captures several combinatorial optimization problems as special cases. In particular, Geelen showed that the problem can be solved in deterministic polynomial time via a greedy algorithm that incrementally assigns values while preserving rank growth \cite{Geelen1999}. The problem generalizes classical questions such as bipartite matching and linear matroid intersection, and plays a central role in algebraic approaches to matching and related problems. Extending the techniques used to prove Theorem \ref{thm:main1}, we show the following result. 

\begin{theorem}\label{thm:main2}
The maximum rank completion problem can be solved in $\CLP$. 
\end{theorem}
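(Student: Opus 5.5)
The plan is to run Geelen's greedy algorithm for maximum rank completion \cite{Geelen1999} inside the catalytic model. Use the catalytic tape only for two things: a pool of candidate field elements, and rank computations of explicit matrices, which lie in $\TC^1\subseteq\CL$. Recall the setting: $A$ is an $n\times n$ partial matrix over $\F$. Each unspecified entry is treated as an indeterminate, one per entry, so we have a mixed matrix $A=A_0+\sum_{(i,j)\in S} x_{ij}E_{ij}$. Geelen's algorithm starts from a fixed completion, for instance all indeterminates set to $0$. It then repeatedly looks for a single unspecified entry and a value from a small set $T\subseteq\F$ (with $|T|$ polynomial in $n$, taken from an extension field if $\F$ is small) whose reassignment increases the rank. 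Geelen proves a local-optimality criterion: if no such single-entry change (or single-entry change combined with a bounded local rearrangement) increases the rank, then the current rank equals the maximum rank $\max_x \rank A(x)$. Each phase raises the rank by at least one, so there are at most $n$ phases.

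The key steps, in order, are as follows.
\begin{enumerate}
\item State Geelen's criterion in the form needed. Given the current completion $A(c)$, either there is an $(i,j)\in S$ and $t\in T$ with $\rank A(c+tE_{ij})>\rank A(c)$, or $A(c)$ already has maximum rank. I would reprove this through the Schwartz--Zippel argument applied to the one-variable polynomial in $x_{ij}$ given by the relevant minors of $A(c+x_{ij}E_{ij})$. This shows that a nonzero univariate polynomial of degree at most $n$ has a nonroot in any $T$ with $|T|>n$.
\item Each test is one rank computation of an explicit $n\times n$ matrix over $\F$ (or a polynomial-size extension). This is in $\TC^1$, hence in $\CL$ and in $\CLP$, by the known containment $\TC^1\subseteq\CLP$.
\item The outer loop runs at most $n$ phases, and each phase tries at most $|S|\cdot|T|=\mathrm{poly}(n)$ candidates. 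This gives polynomial time.
\end{enumerate}

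The only state carried across phases is the current completion $c$. I expect this to be the main obstacle: $c$ has up to $n^2$ entries, far beyond $O(\log n)$ free space. To handle it, encode $c$ by the log-size history of modifications, that is, the sequence of (entry, value) pairs chosen so far. There are at most $n$ of these, but storing even this takes $O(n\log n)$ bits, which is too much for a logspace work tape.

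So the first approach I would try is to keep the history on the catalytic tape, making each phase reversible. Each phase is a deterministic function of the previous state, so we can XOR-encode the new pair into a designated block of the catalytic tape and later uncompute everything in reverse order. This is the same compress-or-recompute bookkeeping used for bipartite matching in \cite{AM} and for linear matroid intersection in \cite{AAV26}.

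If this reversible bookkeeping turns out to be delicate, I would fall back on a second approach. Reduce maximum rank completion to linear matroid intersection via \cite{Murota1995Mixed}, invoke the $\CLP$ algorithm of \cite{AAV26} to obtain the maximum rank $r$, and then construct the completion entry by entry. For each unspecified entry in turn, fix it to the first $t\in T$ such that the maximum achievable rank of the remaining partial matrix stays $r$, recomputing this rank with the oracle. This self-reduction needs only the index of the current entry plus the chosen values. The chosen values can be written directly to the output tape, and they are regenerated on demand by rerunning the deterministic choice procedure, which keeps the work space logarithmic at the cost of polynomial recomputation.
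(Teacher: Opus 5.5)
Your proposal has a genuine gap in the space bookkeeping, and neither of your two routes closes it. XOR-ing each chosen (entry, value) pair into a catalytic block with unknown initial content $\tau$ does not let you read the pair back, because recovering $p$ from $\tau\oplus p$ requires remembering $\tau$. Phase $k+1$ needs the whole current completion, so across phases you are again holding $\Theta(n\log n)$ bits. Indeed, if one could reversibly iterate an arbitrary deterministic process with polynomial-size state on the catalytic tape in this way, all of $\P$ would lie in $\CL$, which is open.

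Your fallback self-reduction fails for the same reason. The oracle instance for entry $k$ contains the $k-1$ values already chosen, and the output tape is write-only. Regenerating those values on demand is a recursion of depth up to $m$, which needs $\Theta(m\log n)$ bits of work space and exponential time. Getting only the value $\rank A$ via Murota's reduction and \cite{AAV26} is fine, but the problem asks for a completion. The paper flags exactly this obstruction to self-reduction in the matching setting.

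The missing idea is genuine \emph{compression} of the catalytic tape, which is what compress-or-compute in \cite{AM} and in the paper actually means. The paper reads each of $n^3$ catalytic blocks as a candidate completion $(a_1,\dots,a_m)$. It tests optimality in $\CL$ using Geelen's criterion, with rank and deficiency-set computations in $\TC^1$. If the test fails, \Cref{2A'} and \Cref{2B'} apply; they rest on every relevant minor being linear in $x_i$. They show that $a_i$ is the \emph{unique} value keeping the rank at $k$, respectively keeping a witness $u$ out of $D$. So the block can be rewritten as $(\text{case}, i, k \text{ or } u, (a_j)_{j\ne i})$, freeing $\Omega(\log n)$ bits. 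Either some block is already an optimal completion, or $\Omega(n^3\log n)$ bits have been freed to run Geelen's polynomial-time algorithm \cite{Geelen1999} outright. Afterwards each block is decompressed by searching for that unique value.

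Independently, your Step~1 is false. Take the partial matrix $\begin{pmatrix} x_1 & x_2\\ x_3 & 0\end{pmatrix}$ at the completion $(x_1,x_2,x_3)=(1,0,0)$. The rank is $1$ and every single-entry change keeps it at most $1$, yet the maximum rank is $2$; this is the matrix analogue of a maximal but not maximum matching. Your univariate Schwartz--Zippel argument presupposes that some $(k+1)$-minor is a nonzero polynomial in $x_{ij}$ once the other entries are frozen at $c$, and that is exactly what fails here. Geelen's theorem (\Cref{geelen2}) has a second alternative: a single-entry change that keeps the rank but strictly enlarges the deficiency set $D$. Both the optimality test and the compression above depend on this alternative. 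It also means phases need not raise the rank, so your bound of $n$ phases is wrong, though the true bound is still polynomial.
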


In \cite{BlaserJindalPandey2018}, Bl\"{a}ser, Jindal, and Pandey study the problem of computing the commutative rank of a matrix space, a fundamental quantity that captures the maximum rank achievable by linear combinations of given matrices $A_1, A_2, \ldots, A_n$. More precisely, the rank of a matrix space $\mathcal{A}=\langle A_1, A_2, \ldots, A_n\rangle$ over a sufficiently large field $\mathbb{F}$ is the maximum rank of a matrix in $\mathcal{A}$. If we consider the formal linear combination $A=\sum_{i=1}^n A_i x_i$, then we can compute a matrix of maximum rank by substituting random values to $x_1, x_2, \ldots, x_n$ from $\mathbb{F}$ by applying the following Polynomial Identity Testing Lemma \cite{Zippel1979, Schwartz1980, DeMilloLipton1978}:
\begin{lemma}\label{SZ}
 Let
$
P \in R\left[x_1, x_2, \ldots, x_n\right]
$
be a non-zero polynomial of total degree $d \geq 0$ over an integral domain $R$. Let $S$ be a finite subset of $R$ and let $r_1, r_2, \ldots, r_n$ be selected at random independently and uniformly from $S$. Then
$
\operatorname{Pr}\left[P\left(r_1, r_2, \ldots, r_n\right)=0\right] \leq \frac{d}{|S|}
$
\end{lemma}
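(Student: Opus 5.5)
We prove the Schwartz--Zippel bound in Lemma~\ref{SZ} by induction on the number of variables $n$.

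For $n=1$, $P$ is a nonzero univariate polynomial of degree $d$ over the integral domain $R$. It therefore has at most $d$ roots in $R$. This follows by embedding $R$ into its field of fractions, where a nonzero degree-$d$ polynomial has at most $d$ roots by repeated use of the factor theorem. Hence $\Pr[P(r_1)=0]\le d/|S|$. The case $d=0$ is trivial: a nonzero constant never vanishes.

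For the inductive step, we single out $x_n$ and write
\[
P=\sum_{i=0}^{k} x_n^{i}\,Q_i(x_1,\dots,x_{n-1}),
\]
where $k$ is the largest power of $x_n$ whose coefficient $Q_k$ is nonzero. Then $Q_k$ has total degree at most $d-k$. Let $E$ be the event that $Q_k(r_1,\dots,r_{n-1})=0$. By the induction hypothesis, $\Pr[E]\le (d-k)/|S|$.

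Conditioned on any fixed values $r_1,\dots,r_{n-1}$ outside $E$, the polynomial $P(r_1,\dots,r_{n-1},x_n)$ is a nonzero univariate polynomial in $x_n$ of degree exactly $k$. Since $r_n$ is drawn independently and uniformly from $S$, the base case bounds the conditional probability that it vanishes by $k/|S|$. Combining the two cases gives
\[
\Pr[P=0]\le \Pr[E]+\Pr[P=0\mid \neg E]\le \frac{d-k}{|S|}+\frac{k}{|S|}=\frac{d}{|S|}.
\]

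The only delicate point is the degree bookkeeping. We must check that $\deg Q_k\le d-k$, which holds because every monomial of $x_n^kQ_k$ has total degree at most $d$. We must also check that $k$ is defined via the largest power with nonzero coefficient, so that the specialized polynomial is genuinely nonzero off $E$. Both are routine, so no step is a real obstacle. The one place where the integral-domain hypothesis is essential is the univariate root count.
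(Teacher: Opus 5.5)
Your proof is correct. It is the standard induction on the number of variables: you peel off the leading coefficient $Q_k$ of $x_n$, bound $\Pr[Q_k(r_1,\dots,r_{n-1})=0]$ by $(d-k)/|S|$ via the induction hypothesis, and bound the vanishing of the remaining univariate polynomial of exact degree $k$ by the root count over the fraction field of $R$. The paper gives no proof of this lemma; it quotes it as a black box from \cite{Zippel1979,Schwartz1980,DeMilloLipton1978}, and your argument is essentially Schwartz's, so there is nothing further to reconcile.
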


As already mentioned, exact computation of the commutative rank of a given matrix space in deterministic polynomial time is hard \cite{KabanetsImpagliazzo2004}.
Nevertheless, Bl\"{a}ser, Jindal, and Pandey \cite{BlaserJindalPandey2018} could show a deterministic \emph{polynomial-time approximation scheme} (PTAS) that computes a $(1-\eps)$-approximation of the rank for any fixed $\eps>0$. Their approach combines algebraic and combinatorial techniques to bypass randomness. Our main result in this context is the following.

\begin{theorem}\label{thm:main3}
Given a matrix space $\mathcal{A}=\langle A_1, A_2, \ldots, A_n\rangle$ by a set of matrices $A_1, A_2, \ldots, A_n$ and $\epsilon >0$, we can compute a matrix $A\in \mathcal{A}$ whose rank is at least $1-\epsilon$ times the rank of $\mathcal{A}$, in $\CLP$. 
\end{theorem}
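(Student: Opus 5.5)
We will follow the Bl\"{a}ser--Jindal--Pandey (BJP) PTAS and check that each of its ingredients runs in $\CLP$, using the catalytic tape only for linear algebra and rank computations. Recall the structure of the BJP algorithm. It maintains a current matrix $B\in\mathcal{A}$ of rank $r$ and tries to increase the rank by local search. The key structural fact is the following. If $B$ is not $(1-\eps)$-optimal, i.e.\ $r<(1-\eps)\rank(\mathcal{A})$, then there is an improvement to a matrix $B+\sum_i c_iA_i$ of larger rank that can be certified by a ``short'' object. The relevant quantity is a Wong-sequence / increasing-subspace argument of length $k=\lceil 1/\eps\rceil$. Equivalently, one considers the symbolic matrix $B\otimes I_k+\sum_i A_i\otimes T_i$, or a polynomial of degree bounded in terms of $k$ in a constant number of fresh variables. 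Whether this object is nonzero can be tested by substituting values from an explicit polynomial-size set, since the degree is $\mathrm{poly}(n)$ and the number of effective variables is $O(k)=O(1)$. By Lemma~\ref{SZ}, an explicit grid of size $\mathrm{poly}(n)^{O(1/\eps)}$ contains a good point. Hence each improvement step is a deterministic search over polynomially many candidates $B'=B+\sum_i c_iA_i$, each checked by a rank computation. There are at most $n$ improvement steps.

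The plan is therefore as follows. \textbf{(1)} Observe that computing the rank of an explicit matrix over $\bbQ$ or a finite field is in $\TC^1$ (via Mulmuley's algorithm). Since $\TC^1\subseteq\CL$, and in fact the known simulation gives $\CLP$, rank tests are available as $\CLP$ subroutines. Composing polynomially many $\CLP$ calls, each of which restores the catalytic tape, stays in $\CLP$. \textbf{(2)} Implement one BJP round. For each candidate coefficient tuple $(c_1,\dots,c_n)$ in the explicit enumeration (indexed by $O(\log n/\eps)$ bits of the work tape), compute $\rank(B+\sum c_iA_i)$ and keep the first candidate that increases the rank. \textbf{(3)} Iterate at most $n$ rounds.

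The main obstacle is step (3). We cannot store the current matrix $B$ on the logspace work tape, because it has $\mathrm{poly}(n)$ bits. We would first try to avoid storing $B$ by recomputing it. $B$ is determined by the sequence of choices made in the previous rounds, and each choice is an index of $O(\log n/\eps)$ bits. Storing all of them takes $O(n\log n/\eps)$ bits, which is too much. So we instead note that BJP's improvement step only needs the current rank value, and we redefine the algorithm so that the choice in round $t$ is canonical, namely the lexicographically first improving candidate. Then $B_t$ is a deterministic function of $B_{t-1}$, and recomputing $B_t$ from scratch recursively would blow up time exponentially. The fallback is to write $B$ on the catalytic tape reversibly: XOR $B_t$ into a designated region of the catalytic tape, use that region as an input for the rank subroutine (which uses a different catalytic region), and then XOR-update it to $B_{t+1}$ by XORing in $B_t\oplus B_{t+1}$. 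At the end we recompute and undo. Undoing requires regenerating the whole sequence $B_1,\dots,B_T$ in reverse, which can be done by replaying forward, since each step is deterministic given the region's contents. This keeps polynomial time, at $O(n)$ rounds of polynomial work each, and restores the catalytic tape. Finally we output $B_T$ entry by entry from the catalytic region, and its rank is at least $(1-\eps)\rank(\mathcal{A})$ by the BJP guarantee, as required by Theorem~\ref{thm:main3}.
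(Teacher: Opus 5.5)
Your single-round test is fine: deciding whether some $\ell$-coordinate perturbation from an explicit set raises the rank takes polynomially many rank computations, each in $\TC^1\subseteq\CL$.

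\textbf{The gap is in step (3).} After you XOR $B_t$ into a region of the catalytic tape, that region holds $\tau\oplus B_t$. Here $\tau$ is the arbitrary initial content, which you neither know nor have room to store. So you cannot read $B_t$. As a result you cannot:
\begin{itemize}
\item feed $B_t$ to the rank subroutine;
\item compute the next improving candidate or $B_t\oplus B_{t+1}$;
\item output $B_T$ at the end;
\item ``replay forward'' to undo.
\end{itemize}
The transparent register-program computations behind $\TC^1\subseteq\CL$ do not rescue this. They compose only logarithmic-depth computations, whereas you chain up to $n$ sequential rounds of a polynomial-time loop. Indeed, if polynomial-size intermediate state could be carried on the catalytic tape this way, the same trick would simulate any polynomial-time algorithm and give $\P\subseteq\CL$, which is open.

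\textbf{The missing idea is a way to obtain genuinely free space.} The paper does this by compress-or-compute, using the catalytic contents in the clear rather than as a hiding place for your own data. Fix $\ell=\lceil 1/\eps-1\rceil$, let $S$ be the first $n^c$ field elements with $c=2\ell+3$, and read each of $N=n^3$ catalytic blocks as an assignment $\vec a\in S^m$. By \Cref{thm:ed}, one of two things happens.
\begin{itemize}
\item $\sum_i A_ia_i$ is already $(1-\eps)$-good. Then you output it and restore the earlier blocks.
\item Some $\ell$-set $I\subseteq[m]$ and perturbation in $S^\ell$ raise the rank. Then \Cref{bad} applies: Schwartz--Zippel on the determinant of a larger witnessing minor, viewed as a polynomial in the $\ell$ perturbed variables. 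It shows that the set $\mfI_{\vec a,I}$ of rank-preserving tuples has size at most $n^{c\ell-c+1}$, and this set contains $(a_{i_1},\dots,a_{i_\ell})$. So you overwrite those $c\ell\log n$ bits by $I$, the index $j$ of this tuple in $\mfI_{\vec a,I}$, and $k=\rank(\sum_i A_ia_i)$. This saves $\Omega(\log n)$ bits, and it is reversible by re-enumerating $\mfI_{\vec a,I}$.
\end{itemize}
If all $N$ blocks are bad, you have $\Omega(n^3\log n)$ truly free bits. In that space you run the BJP algorithm directly, storing its current assignment there, and afterwards decompress all blocks. This freed space is what legitimately replaces your XOR region.
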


Since our main result uses ideas from Geelen's algebraic matching algorithm \cite{Geelen2000}, we provide a brief overview of his result. Let $T$ be the Tutte matrix of $G=(V,E)$ where $|V|=n$. For any evaluation 
of $T'$, let $\rank(T')$ denote its rank.  
Let $D(T')$ denote the corresponding set of deficiency, i.e., the set of vertices $x$ such that $\rank(T'\setminus \{x\})=\rank(T')$.  Intuitively, the set of vertices corresponds to the Gallai--Edmonds decomposition induced by $T'$ in a linear algebraic sense. Then, Geelen \cite{Geelen2000} defines a preorder $T'_1 \preceq T'_2$ either
\[
\rank(T'_1)< \rank(T'_2)
\quad \text{or} \quad
(\rank(T'_2)=\rank(T'_1)
\quad \text{and}\quad
D(T'_1)\subseteq D(T'_2)).
\]
If both $\rank(T'_1)=\rank(T'_2)$ and $D(T'_1)=D(T'_2)$ then $T'_1=T'_2$. 
The algorithm incrementally assigns values to the indeterminates corresponding to the edges and constructs a sequence $T'_1\prec T'_2\prec \cdots\prec T'_m$ where the final matrix achieves the rank of the Tutte matrix. 

More precisely, to go from $T'_i$ to $T'_{i+1}$, the algorithm selects an edge $(i,j)$ and replaces its value by $a\in\{1,2, \ldots, n\}$ such that $T'_i\prec T'_{i+1}$. If such an assignment does not exist then the algorithm already has found the evaluated matrix achieving the true rank. After finding the size of the maximum matching, they use self-reducibility of matching to search for one. But under our space bounded constraints, the search for a maximum matching in the same way as above, does not work out. Notice that \cite{ACD} shows a $\CL$ reduction from search to weighted decision, that works in the matching context as well, but it is also not clear how to solve the weighted decision using our ideas. To construct a maximum matching of the graph, we need some substantially new ideas. We again sample weight assignments from the catalytic tape, as \cite{AM}, but we are not able to find min-isolating weights. In stead, we have a non-zero polynomial, the least degree of which turns out to be the weight of the matching that we attempt to isolate. Smaller weighted matchings may disappear in this polynomial. Finally, if we are not able to find a matching even in this way, we show that we are able to save enough space to construct a matching in polynomial space and time, and later restore the catalytic tape. We heavily use the \emph{`compress or compute'} paradigm.

\section{Preliminaries}
 
\paragraph*{Matchings}

Let $G = (V,E)$ be a simple undirected graph. A \emph{matching} $M \subseteq E$ is a set of pairwise vertex-disjoint edges. A matching is said to be \emph{maximum} if it has the largest possible cardinality among all matchings in $G$.

We denote by $\nu(G)$ the size of a maximum matching in $G$, i.e.,
\[
\nu(G) = \max \{ |M| : M \subseteq E \text{ is a matching} \}.
\]

\paragraph*{Gallai--Edmonds Decomposition}

Let $G = (V,E)$ be a simple graph. Define:
\begin{enumerate}
	\item $D(G)$: the set of vertices that are unmatched in at least one maximum matching of $G$,
	\item $A(G)$: the set of vertices in $V \setminus D(G)$ that have a neighbor in $D(G)$,
	\item $C(G) := V \setminus (A(G) \cup D(G))$.
\end{enumerate}

The partition $(D(G), A(G), C(G))$ is called the \emph{Gallai--Edmonds decomposition} of $G$.

This decomposition satisfies the following properties:
\begin{enumerate}
	\item Each connected component of $G[D(G)]$ is factor-critical.
	\item $G[C(G)]$ has a perfect matching.
	\item In every maximum matching:
	\begin{enumerate}
		\item all vertices of $A(G)$ are matched to vertices in $D(G)$,
		\item each component of $G[D(G)]$ has exactly one unmatched vertex.
	\end{enumerate}
\end{enumerate}

\paragraph*{Linear Matroids}

Let $\mathbb{F}$ be a field. A \emph{linear matroid} is a matroid $M = (E, \mathcal{I})$ where $E$ is a finite set and there exists a matrix $A \in \mathbb{F}^{r \times |E|}$ such that a subset $I \subseteq E$ is independent (i.e., $I \in \mathcal{I}$) if and only if the corresponding set of columns of $A$ is linearly independent over $\mathbb{F}$.

\paragraph*{Linear Matroid Intersection}

Let $M_1 = (E, \mathcal{I}_1)$ and $M_2 = (E, \mathcal{I}_2)$ be two linear matroids represented over the same ground set $E$. The \emph{linear matroid intersection problem} asks to find a set
\[
I \subseteq E \quad \text{such that } I \in \mathcal{I}_1 \cap \mathcal{I}_2
\]
of maximum cardinality.

\paragraph*{Linear Matroid Parity}

Let $M = (E, \mathcal{I})$ be a linear matroid, and suppose the ground set $E$ is partitioned into disjoint pairs:
\[
E = \{ e_1, f_1, e_2, f_2, \dots, e_m, f_m \}.
\]

The \emph{linear matroid parity problem} asks to find a maximum-size collection of pairs such that the union of the selected elements is independent in $M$. Formally, find an index set $S \subseteq [m]$ maximizing $|S|$ such that
\[
\{ e_i, f_i : i \in S \} \in \mathcal{I}.
\]

\paragraph*{Linear Matroid Matching}

Let $M = (E, \mathcal{I})$ be a linear matroid, and let $G = (E, F)$ be a graph whose vertex set is the ground set of the matroid.

A set of edges $S \subseteq F$ is a \emph{matroid matching} if:
\begin{enumerate}
	\item $S$ is a matching in the graph $G$, and
	\item the set of vertices covered by $S$ is independent in the matroid $M$.
\end{enumerate}

The \emph{linear matroid matching problem} asks to find a matroid matching of maximum cardinality.

\paragraph*{Determinant}
For a square matrix $A \in \mathbb{F}^{n \times n}$, the determinant is defined as:
\[ \Det(A) = \sum_{\sigma \in S_n} \text{sgn}(\sigma) \prod_{i=1}^{n} A_{i, \sigma(i)} \]
where $S_n$ is the symmetric group and $\text{sgn}(\sigma)$ is the parity of the permutation.

\paragraph*{Pfaffian}
For a skew-symmetric matrix $A \in \mathbb{F}^{2n \times 2n}$ where $A^T = -A$, the Pfaffian is defined as:
\[ \Pf(A) = \frac{1}{2^n n!} \sum_{\sigma \in S_{2n}} \text{sgn}(\sigma) \prod_{i=1}^{n} A_{\sigma(2i-1), \sigma(2i)} \]
It holds that $\Det(A) = \Pf(A)^2$.

\paragraph*{Catalytic Computation}

\begin{definition}
	A catalytic Turing machine $M$ with space $s(n)$ and catalytic space $c(n)$ is a Turing machine that has a read-only input tape of length $n$, write-only output tape, and $s(n)$ space bounded read-write work tape, and a catalytic tape of size $c(n)$. 
	We say that $M$ computes a function $f$ if for every $x \in\{0,1\}^n$ and $\tau \in\{0,1\}^{c(n)}$, the result of executing $M$ on input $x$ with initial catalytic tape $\tau$ i.e. $M(x,\tau)$ satisfies:
	\begin{enumerate}
		\item $M$ halts with $f(x)$ on the output tape.
		\item $M$ halts with the catalytic tape consisting of $\tau$.
	\end{enumerate} 
	$\CSPACE[s(n),c(n)]$ is the family of functions computable by such a Turing machine. Similarly, $\CTISP[t(n),s(n),c(n)]$ is the family of functions computable by such a Turing machine with the further restriction that the machine simultaneously runs in time $t(n)$.
\end{definition}
\begin{definition}[Catalytic Logspace]
	We define catalytic logspace as 
	\[\CL=\cup_{k\in\bbN}\CSPACE[k\log n,n^k]\]
	Moreover, $\CL\P$ is defined as the set of functions computable by a $\CL$ machine that simultaneously runs in polynomial time.
\end{definition}
In particular, we know that $\CL\P=\CL\cap\P$ \cite{CookLiMertzPyne2025}.
To compute the rank of various matrices, we shall use the fact that $\TC^1\subseteq\CL$ \cite{BuhrmanCL2014}. It is well known that Determinants and Pfaffians are computable in $\TC^1$ and thus in $\CL$.

\section{Maximum Matching in CLP}\label{sec:matching}
In this section, we prove the decision version of Theorem \ref{thm:main1}. 

\subsection{A linear algebraic formulation}
\begin{fact}
Let $A$ be any skew symmetric matrix. Then
\begin{enumerate}
	\item $\rank A$ is even.
	\item The rank of $A$ is equal to the maximum size of a non-singular principal minor/submatrix of $A$.
\end{enumerate}
\end{fact}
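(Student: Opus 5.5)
For part 1, I would work over an arbitrary field $\F$, assuming characteristic $\neq 2$ as is implicit for Tutte-matrix arguments; in characteristic $2$ one adds the standard convention that skew-symmetric also means zero diagonal. The plan is to prove that a skew-symmetric matrix is congruent to a block-diagonal normal form. Concretely, I would show by induction on $n$ that there is an invertible $P$ with $P^T A P = \mathrm{diag}(J,\dots,J,0,\dots,0)$, where $J=\begin{pmatrix}0&1\\-1&0\end{pmatrix}$. If $A=0$ there is nothing to do. Otherwise pick $a_{ij}\neq 0$, which forces $i\neq j$ because the diagonal is zero. Permute rows and columns simultaneously so that this entry sits in position $(1,2)$, and scale row and column $1$ by $a_{12}^{-1}$ to make it $1$. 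Then use the block $J$ to clear rows and columns $1,2$ by simultaneous elementary row and column operations. The remaining $(n-2)\times(n-2)$ block is still skew-symmetric, since congruence preserves skew-symmetry, so induction applies. Congruence preserves rank, and the normal form has rank $2k$, so $\rank A$ is even.

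For part 2, one direction is trivial: every nonsingular principal submatrix is a submatrix, so its size is at most $\rank A$. For the converse, let $r=\rank A$. I would choose a set $R$ of $r$ linearly independent rows of $A$ and show that the principal submatrix $A[R,R]$ is nonsingular. This is the standard fact that for a symmetric or skew-symmetric matrix, the submatrix on a maximal independent row set, taken with the same column indices, is invertible. For the proof, every row of $A$ is a combination of the rows indexed by $R$. Because $A^T=-A$, the columns of $A$ are, up to sign, its rows, so every column is likewise a combination of the columns indexed by $R$. Hence $A = C\,A[R,R]\,C^T$ up to sign for a suitable $n\times r$ matrix $C$; more concretely, $\rank A[R,\cdot]=r$ and the columns outside $R$ are combinations of the columns in $R$, so $\rank A[R,R]=\rank A[R,\cdot]=r$.

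The main obstacle is the column step in part 2, namely justifying that the column dependencies restricted to the rows $R$ are the same ones that hold for the whole column space. I would make this airtight as follows. Write $A = X A[R,\cdot]$ for some $X$, using the row dependencies. Transposing and applying skew-symmetry gives $A = A[\cdot,R]\,X^T$, so every column of $A$ is a combination of the columns in $R$, with a single coefficient matrix $X^T$ valid for all rows at once. Restricting to the rows in $R$ yields $A[R,\cdot]=A[R,R]X^T$. Therefore $r=\rank A[R,\cdot]\le \rank A[R,R]$, so $A[R,R]$ is a nonsingular principal submatrix of size $r$.
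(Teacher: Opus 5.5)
The paper states this as a standard fact and gives no proof, so there is no argument to compare against. Your proof is correct and complete.

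\textbf{Part 2.} The chain $A = X\,A[R,\cdot]$, then $A = A[\cdot,R]\,X^{T}$ (by transposing and using $A^{T}=-A$), then $A[R,\cdot]=A[R,R]\,X^{T}$ is exactly right. It gives $A = X\,A[R,R]\,X^{T}$ on the nose, so the ``up to sign'' hedge in your second paragraph is unnecessary. This part also needs no assumption on the characteristic.

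\textbf{Part 1.} Once part 2 is available there is a shorter route, closer in spirit to the paper's use of Pfaffians. A nonsingular principal submatrix $B$ of size $r=\rank A$ is itself skew-symmetric, and $\det B=\det B^{T}=\det(-B)=(-1)^{r}\det B$. When the characteristic is not $2$, this forces $r$ to be even. For alternating matrices in characteristic $2$, one instead uses that the determinant of an odd-order alternating matrix vanishes identically.

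Your congruence normal form is heavier but gives more. It treats the alternating case uniformly in every characteristic, which your remark about the zero-diagonal convention correctly anticipates. It also yields the full canonical form $\mathrm{diag}(J,\dots,J,0,\dots,0)$, not just the parity of the rank. For the paper's purposes, where Tutte matrices are evaluated over characteristic $0$, either route suffices.
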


Let $G=(V,E)$ be a simple graph, $|V|=n,|E|=m$, $x_1,\ldots x_m$ be edge variables. We now state the main result in \cite{Geelen2000} tailored made to our application. The Tutte matrix $T$ of $G$ is the $n\times n$ matrix given by:
\[
T_{uv} = \begin{cases}
	\; x_{i} & u<v,\ i=\{u,v\}\in E,\\[2mm]
	\; -x_{i} & u>v,\ i=\{u,v\}\in E,\\[2mm]
	\; 0 & \text{otherwise.}
\end{cases}
\] 
\begin{theorem}[Geelen, \cite{Geelen2000}]\label{geelen}
	Let $T$ be the Tutte matrix of a simple graph $G=(V,E)$, and let $T'$ be an evaluation of $T$. Then either $\rank T'=\rank T$ and $D(T')=D(T)=D(G)$, or one of the following two statements hold:
	\begin{itemize}
		\item $\exists i\in E, a\in [n^{10}]$ such that $\rank T'_{x_i\gets a}=\rank T'+2$.
		\item $\exists i\in E, a\in [n^{10}]$ such that $\rank T'_{x_i\gets a}=\rank T'\text{and}~ D(T')\subsetneq D(T'_{x_i\gets a})$.
	\end{itemize}
\end{theorem}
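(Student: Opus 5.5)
The plan is to reconstruct Geelen's argument in three steps: identify what $D(T')$ means when $T'$ is a generic evaluation; reduce the statement to a claim about the generic matrix; then use Schwartz--Zippel (Lemma \ref{SZ}) to replace each symbolic substitution by a value from $[n^{10}]$. Throughout, $T'$ is viewed as obtained from $T$ by substituting field values, and $T'_{x_i\gets a}$ changes only the $(u,v),(v,u)$ entries for $i=\{u,v\}$.

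\textbf{Step 1: the generic case.} By the Fact, the rank of a skew-symmetric matrix is the largest size of a nonsingular principal submatrix. Moreover, $\Pf(T[S])\neq 0$ as a polynomial iff $G[S]$ has a perfect matching. Hence $\rank T=2\nu(G)$. Also, $v\in D(T)$ iff $\rank(T\setminus v)=\rank T$, which holds iff some maximum matching misses $v$, so $D(T)=D(G)$.

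\textbf{Step 2: the non-generic case.} Suppose $\rank T'<\rank T$, or $\rank T'=\rank T$ and $D(T')\neq D(T)$. I will show the weak statement: a \emph{symbolic} re-substitution $x_i\gets y$ for a single edge already increases $T'$ in Geelen's preorder.

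First I need monotonicity: whenever $\rank T'=\rank T$, we have $D(T')\subseteq D(T)$. If $\rank(T'\setminus v)=\rank T'=\rank T$, then $\rank(T\setminus v)\ge \rank T$, so $v\in D(T)$.

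The main claim is the exchange step, which I expect to be the principal obstacle. Let $\rank T'=2k$ and let $S$ be a maximal nonsingular principal set. Take a maximum matching $M$ of $G$ that is not "realised" by $T'$. By a Pfaffian expansion/augmenting-path argument, I will find a single edge $i\in M$ such that substituting a fresh variable $y$ for $x_i$ makes $\Pf(T'_{x_i\gets y}[S'])$ nonzero, where $S'$ is either
\begin{itemize}
\item a set of size $2k+2$ (the rank jumps), or
\item of size $2k$ and avoiding some vertex $w\in D(T)\setminus D(T')$, so that $w$ enters the deficiency set.
\end{itemize}
Concretely, I will write
\[
\Pf(T'_{x_i\gets y}[S'])=\alpha y+\beta .
\]
This is linear in $y$, since $x_i$ appears in at most one term pair. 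I then need $\alpha\neq 0$ for a suitably chosen edge. Here $\alpha$ is, up to sign, $\Pf(T'[S'\setminus\{u,v\}])$.

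To choose the edge, I will use the structure of $M\triangle$ (matching represented by $S$) and follow Geelen's approach. I will compare $T'$ with $T$ via the expansion $\Pf(A+B)=\sum_{I}\pm\Pf(A[I])\Pf(B[\bar I])$ over complementary index sets, applied with $B$ the generic correction on the edges of $M$. If every single-edge change left all relevant minors unchanged, then inductively changing the edges of $M$ one by one would never change them, contradicting $\Pf(T[S'])\neq 0$.

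\textbf{Step 3: derandomizing the choice of $a$.} Once a symbolic $y$ works, the witnessing polynomial is $\alpha y+\beta\neq 0$, and for the deficiency case I also need $\rank$ to remain $2k$. Every relevant quantity is a nonzero polynomial in $y$ of degree at most $n$. These are:
\begin{itemize}
\item the Pfaffian witnessing the new rank or new principal set;
\item for the deficiency case, the $\le n$ Pfaffians certifying that vertices already in $D(T')$ stay in $D$, namely nonsingular principal submatrices avoiding each such vertex, which must survive.
\end{itemize}
That is at most $n+1$ polynomials of degree at most $n$, so their product has degree at most $n^2+n<n^{10}$. By Lemma \ref{SZ} with $S=[n^{10}]$, some $a\in[n^{10}]$ avoids all their roots. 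Substituting this $a$ gives the two bullets of Theorem \ref{geelen}. In the remaining case, $\rank T'=\rank T$ and $D(T')=D(T)=D(G)$ by Step 1.
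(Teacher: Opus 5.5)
The paper does not prove this theorem; it imports it from Geelen. Your Step 1, your monotonicity remark, and the reduction in Step 3 from a symbolic $y$ to some $a\in[n^{10}]$ are fine. Each relevant Pfaffian there is in fact linear in $y$, so at most $n+1$ values of $a$ are excluded.

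Step 2, however, carries the entire content of the theorem, and it has a genuine gap. The inference ``if every single-edge change left the relevant minors unchanged, then changing the edges of $M$ one by one would never change them'' is invalid. With the edges of $M$ made symbolic, $\operatorname{Pf}(T'[S'])$ is a multilinear polynomial $f(y_1,\dots,y_r)$. Such a polynomial can vanish identically on every axis-parallel line through the current point $(a_1,\dots,a_r)$ without being zero, e.g.\ $f=(y_1-a_1)(y_2-a_2)$. After the first replacement you are at a new matrix, about which your hypothesis says nothing, so the induction has nothing to run on.

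In fact the exchange claim itself is false. Take $V=\{1,\dots,6\}$ and $E=\{13,24,34,56\}$, so that $\operatorname{Pf}(T)=\pm x_{13}x_{24}x_{56}$. Let $T'$ set $x_{13}=x_{24}=0$ and $x_{34}=x_{56}=1$. Then $\rank T'=4<6=\rank T$. Every single-edge change leaves $x_{13}$ or $x_{24}$ equal to $0$, so none of them raises the rank. Meanwhile $D(T)=D(G)=\emptyset$, so your second alternative (some $w\in D(T)\setminus D(T')$ entering the deficiency set) is vacuous. The theorem holds here only through its second bullet: $x_{13}\gets 1$ keeps the rank at $4$ and enlarges $D(T')=\{1,2\}$ to $\{1,2,4\}$, and the new vertex $4$ does not lie in $D(T)$. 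The same multilinearity objection applies verbatim to your equal-rank case with a fixed generic set $S'$ avoiding $w$.

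So the missing idea is a treatment of local maxima of the rank. You must show two things:
\begin{itemize}
\item when no single change raises the rank, some single change enlarges $D(T')$ at the same rank;
\item an evaluation admitting neither kind of improvement already has rank $2\nu(G)$.
\end{itemize}
Comparing $T'$ edge by edge with the generic matrix, along a fixed maximum matching or a fixed generic principal set, cannot deliver this. In the rank-deficient case the progress that has to be made is an enlargement of $D(T')$ by vertices that need not lie in $D(T)$, and the generic matrix gives no information about them. Geelen's argument instead studies an evaluation that is maximal in the preorder. It uses the structure of $D(T')$, the linear-algebraic analogue of the Gallai--Edmonds set alluded to in the paper's introduction, to certify that such a maximal evaluation has rank $2\nu(G)$ and satisfies $D(T')=D(G)$. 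Some global certificate of this kind is what your Step 2 would need.
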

Recall that $T'_{x_i\gets a}$ is the matrix obtained from $T'$ by changing the value corresponding to the edge variable $x_i$ by $a$. Moreover, as already mentioned in Section \ref{sec:intro} that $D(T)$ is the set of vertices which are deficient in terms of rank. Similarly $D(G)$ is the set of vertices in $G$ which are missed by some maximum matching.  
\begin{remark}\label{rmk:remark1}
	In \cite{BuhrmanCL2014} it is shown that $\TC^1\subseteq \CL$. This already shows the determinant and matrix rank computation can be performed in $\CL$. Clearly, the membership question for the set $D$ logspace reduces to the rank computation which is in $\CL$. 
\end{remark}

Given an evaluation $T'$ of the Tutte matrix $T$, the following algorithm decides whether the rank of $T'$ is the highest possible rank (or the rank of $T$). 

\begin{algorithm}
	\caption{IsMaxRank$(T',T,G=(V,E))$}\label{alg1}
	\begin{algorithmic}[1]
		\State $k\gets \rank T'$ \Comment{Can be done in $\CL$}
		\State $maxrank\gets true$
		\For {$i\in E,a\in [n^{10}]$}
		\State $T''=T'_{x_i\gets a}$
		\If {$(k<\rank T'')\vee (k=\rank T''\wedge D(T')\subsetneq  D(T''))$}\Comment{Can be checked in $\CL$}
		\State $maxrank\gets false$
		\EndIf
		\EndFor
		\State \Return $maxrank$
	\end{algorithmic}
\end{algorithm}
\begin{lemma}
	Suppose, we are given an evaluation of $T$, namely $T'$. Then we can check in $\CL$ if $\rank T=\rank T'$ or not.
\end{lemma}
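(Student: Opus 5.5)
We prove the lemma by showing that Algorithm~\ref{alg1} (IsMaxRank) is correct and runs in $\CL$.

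\textbf{Correctness.} This follows directly from Theorem~\ref{geelen}, taken in both directions.

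Suppose $\rank T'=\rank T$. Since $T'$ is an evaluation of $T$, every modification $T''=T'_{x_i\gets a}$ is also an evaluation of $T$. Hence $\rank T''\le \rank T=\rank T'$, so the first disjunct in the test never fires. Theorem~\ref{geelen} also gives $D(T')=D(T)=D(G)$.

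For the second disjunct, note that every evaluation $T''$ with $\rank T''=\rank T$ satisfies $D(T'')=D(T)$. Indeed, for any vertex $x$ we have $\rank T''\setminus\{x\}\le \rank T\setminus\{x\}\le\rank T=\rank T''$. So if $x\in D(T'')$, then equality holds throughout and $x\in D(T)$; this gives $D(T'')\subseteq D(T)$. Applying the same argument to $T''$ in place of $T'$ is not directly needed, because we only need that $D(T')\subsetneq D(T'')$ is impossible. That follows since $D(T'')\subseteq D(T)=D(T')$.

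Hence the algorithm returns \emph{true}.

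Conversely, if $\rank T'<\rank T$, the first alternative of Theorem~\ref{geelen} fails. So one of its two bullets holds for some $i\in E$ and $a\in[n^{10}]$. That bullet is exactly the condition tested in line~5, so the algorithm returns \emph{false}.

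\textbf{Space and catalytic bound.} The outer loop ranges over $i\in E$ and $a\in[n^{10}]$. These are $O(\log n)$-bit counters on the work tape. The matrix $T''$ is never written down. Each entry of $T''$ can be produced on demand in logspace from the input, $T'$, and the current pair $(i,a)$.

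Each test in line~5 is a Boolean combination of polynomially many rank queries:
\begin{itemize}
\item $\rank T'$ and $\rank T''$;
\item for each vertex $x$, the ranks $\rank T'\setminus\{x\}$ and $\rank T''\setminus\{x\}$, which decide membership in $D(T')$ and $D(T'')$ (Remark~\ref{rmk:remark1}).
\end{itemize}
Checking $D(T')\subsetneq D(T'')$ amounts to looping over vertices with $O(\log n)$ space. At each vertex we check that $x\in D(T')$ implies $x\in D(T'')$, and we record in one bit whether some vertex of $D(T'')$ lies outside $D(T')$.

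Rank is in $\TC^1\subseteq\CL$ by \cite{BuhrmanCL2014}. Each rank query is a $\CL$ subroutine that restores the catalytic tape on return. Because of this, we can run the queries sequentially, reusing the same catalytic tape and keeping only $O(\log n)$ bits of state between calls: the loop counters, the vertex counter, $k$, and the flag bits. Composition of sequential $\CL$ calls, with logspace-computable inputs to those calls, stays in $\CL$.

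\textbf{Main obstacle.} The only delicate point is this composition step: a $\CL$ machine that makes oracle calls to $\CL$ subroutines on implicitly defined inputs must itself be in $\CL$. This holds because each subroutine resets the catalytic tape and uses only logspace on the work tape. The inputs are recomputed bit by bit whenever they are needed. Polynomial time likewise follows if each rank computation is in $\CLP$, which holds since $\TC^1\subseteq\CLP$.
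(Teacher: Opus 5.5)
You follow the paper's route: run \Cref{alg1}, justify it by \Cref{geelen}, and get the space bound from rank $\in\mathsf{TC}^1\subseteq\mathsf{CL}$. Your space and composition analysis is fine, and so is the direction ``$\rank T'<\rank T\Rightarrow$ output \emph{false}''. The gap is in the other direction. You assert two things without valid justification. First, that \Cref{geelen} gives $D(T')=D(T)$ whenever $\rank T'=\rank T$. Second, that every maximum-rank evaluation satisfies $D(T'')=D(T)$; you only prove $D(T'')\subseteq D(T)$. \Cref{geelen} is a disjunction whose first branch is the \emph{conjunction} ``$\rank T'=\rank T$ and $D(T')=D(T)$''. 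It does not say that equal rank forces equal deficiency sets, and that implication is false.

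Here is a counterexample. Take $K_4$ on $\{1,2,3,4\}$ plus a vertex $0$ adjacent only to $1$. Then $\rank T=2\nu(G)=4$ and $D(T)=D(G)=\{0,2,3,4\}$. Evaluate $x_{12}=x_{34}=x_{14}=x_{01}=1$, $x_{13}=x_{24}=2$, $x_{23}=3$. Then $\operatorname{Pf}(T'[\{1,2,3,4\}])=1-4+3=0$, while $\operatorname{Pf}(T'[\{0,1,3,4\}])=x_{01}x_{34}=1$. So $\rank T'=4=\rank T$, but $D(T')=\{2,3,4\}$ does not contain $0$. Resetting $x_{23}\gets 4$ keeps the rank at $4$ and enlarges $D$ to $\{0,2,3,4\}$. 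So line~5 fires and \Cref{alg1} returns \emph{false} on a maximum-rank evaluation. No argument can show that \Cref{alg1} decides $\rank T'=\rank T$ exactly.

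What \Cref{alg1} actually decides is whether $T'$ is maximal in Geelen's preorder, i.e.\ whether $\rank T'=\rank T$ \emph{and} $D(T')=D(T)$. Output \emph{true} certifies $\rank T'=\rank T$. Output \emph{false} guarantees one of the two bullets of \Cref{geelen}. This one-sided guarantee is all that \Cref{alg2} and \Cref{thm:mm} use, since either bullet permits compression. The paper's one-line proof of this lemma rests on the same reading of \Cref{geelen}, so it too only establishes the one-sided version.

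To get the lemma exactly as stated you need a different argument. Compute $\rank T=2\nu(G)$ with the compress-or-compute procedure of \Cref{thm:mm}; this causes no circularity, because that proof depends only on the one-sided property. Then compute $\rank T'$ in $\mathsf{CL}$ directly and compare the two values.
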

\begin{proof}
	\Cref{alg1} gives the required procedure, the proof of which directly follows from \Cref{geelen} and \Cref{rmk:remark1}.
\end{proof}

Next we show two uniqueness lemmas, which will turn out to be essential ingredients in the design of our catalytic algorithms.

\begin{lemma}[$2A$]\label{2A}
	Let $T'$ be an evaluation of $T$ and $i\in E, a\in [n^{10}]$ such that $\rank T'_{x_i\gets a}=\rank T'+2$, and in $T'$, $x_i$ is $a_i$. Then $\forall \tilde a\ne a_i$, we have $\rank T'_{x_i\gets \tilde a}=\rank T'+2$.
\end{lemma}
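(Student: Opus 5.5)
The plan is to view the rank of $T'$ as a function of the single value $t$ substituted for $x_i$, and to use Pfaffians of principal submatrices, which turn out to be affine in $t$. Write $i=\{u,v\}$ with $u<v$, let $r=\rank T'$, and let $T'(t)=T'_{x_i\gets t}$, so that $T'(a_i)=T'$ and $\rank T'(a)=r+2$. Only the two entries $(u,v)$ and $(v,u)$ depend on $t$, namely as $t$ and $-t$.

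\emph{Upper bound.} For every $t$, $T'(t)-T'$ is supported on the two entries $(u,v),(v,u)$. Hence it has rank at most $2$, and $\rank T'(t)\le r+2$ for all $t$.

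\emph{Lower bound.} This is where the structure is used.
\begin{itemize}
\item By the Fact (rank of a skew-symmetric matrix equals the largest size of a nonsingular principal submatrix), there is $S\subseteq V$ with $|S|=r+2$ and $\Det(T'(a)[S])\ne 0$. Since $\Det=\Pf^2$, this gives $\Pf(T'(a)[S])\neq 0$.
\item Define $p(t)=\Pf(T'(t)[S])$. Expanding the Pfaffian as a signed sum over perfect matchings of $S$, each term uses the entry at $\{u,v\}$ at most once. So $p$ is a polynomial of degree at most $1$ in $t$.
\item Since $\rank T'(a_i)=r<r+2$, every principal submatrix of size $r+2$ of $T'(a_i)$ is singular, so $p(a_i)=0$.
\item Together with $p(a)\ne 0$, this shows $p$ is a nonzero affine function vanishing at $a_i$. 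In particular $p$ is not constant, which incidentally forces $u,v\in S$.
\item A nonzero polynomial of degree at most $1$ has at most one root, so $p(\tilde a)\neq 0$ for every $\tilde a\ne a_i$. Hence $T'(\tilde a)[S]$ is nonsingular and $\rank T'(\tilde a)\ge r+2$.
\end{itemize}

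Combining the two bounds gives $\rank T'_{x_i\gets\tilde a}=r+2$ for all $\tilde a\ne a_i$.

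The only step needing care is the claim that the Pfaffian of a principal submatrix is affine in a single edge variable, since the determinant itself is quadratic in $t$. This follows directly from the matching expansion of the Pfaffian, which is the reason to work with $\Pf$ rather than $\Det$.
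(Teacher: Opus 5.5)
Your proof is correct and follows essentially the same route as the paper: take a size-$(r+2)$ principal submatrix that is nonsingular at $a$, note that its Pfaffian is affine in the edge value and vanishes at $a_i$, conclude it is nonzero at every $\tilde a\ne a_i$, and cap the rank using the fact that changing two entries raises rank by at most $2$. Your version also spells out why the Pfaffian vanishes at $a_i$ (since $\rank T'=r$) and why it is affine (the matching expansion), both of which the paper leaves implicit.
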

\begin{proof}
	Let $k=\rank T'$. 
		Let $X\subseteq V, |X|=k+2$ be a maximal principal minor of $T$ that is non-singular in $T'_{x_i\gets a}[X]$. Notice that $L(y)=\Pf (T'_{x_i\gets y}[X])$ is linear in $y$. Since $L(a)\ne 0, L(a_i)=0$, clearly $\forall \tilde a\ne a_i, L(\tilde a)\ne 0$ i.e $T'_{x_i\gets \tilde a}[X]$ is non-singular.
	 Furthermore, on changing two entries of $T'$ (i.e. by changing the variable $x_i$) the rank can go up by atmost $2$. Therefore, $\rank T'_{x_i\gets \tilde a}=\rank T'+2$ $\forall \tilde a\ne a$.
\end{proof}

\begin{lemma}[$2B$]\label{2B}
	Let $T'$ be an evaluation of $T$ and $i\in E, a\in [n^{10}]$ such that $\rank T'_{x_i\gets a}=\rank T'=k$ where in $T'$, $x_i$ is set to $a_i$. Suppose $\exists u\in D(T'_{x_i\gets a})\setminus D(T')$. Then $\forall \tilde a\ne a_i$, we have $u\in D(T'_{x_i\gets \tilde a})$. 
\end{lemma}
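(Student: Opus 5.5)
The plan is to characterise membership in $D$ through principal Pfaffians, as in the proof of \Cref{2A}, and to use that each relevant Pfaffian is linear in the edge variable. First I would record a characterisation valid for any skew-symmetric evaluation $M$ of $T$. Since $M\setminus\{u\}$ is again skew-symmetric, its rank is even and lies in $\{\rank M-2,\rank M\}$. By the Fact (rank equals the largest size of a non-singular principal submatrix), $u\in D(M)$ holds if and only if there is a set $X\subseteq V\setminus\{u\}$ with $|X|=\rank M$ and $\Pf(M[X])\neq 0$.

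Next I would apply this characterisation to both matrices in the hypothesis. Since $u\in D(T'_{x_i\gets a})$ and $\rank T'_{x_i\gets a}=k$, there is a set $X$ with $u\notin X$, $|X|=k$ and $\Pf(T'_{x_i\gets a}[X])\neq 0$. Since $u\notin D(T')$ and $\rank T'=k$, every $k$-subset avoiding $u$ has vanishing Pfaffian in $T'$; in particular $\Pf(T'[X])=0$. Define $L(y)=\Pf(T'_{x_i\gets y}[X])$. This is a polynomial of degree at most $1$ in $y$, because each term of the Pfaffian uses the entry $x_i$ at most once. We have $L(a_i)=0$ and $L(a)\neq 0$, which also forces $a\neq a_i$. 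Hence $L$ has its unique root at $a_i$, so $L(\tilde a)\neq 0$ for every $\tilde a\neq a_i$. Consequently $T'_{x_i\gets\tilde a}$ has rank at least $k$, and this is witnessed by a principal submatrix that avoids $u$.

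The remaining step, which I expect to be the only delicate one, is to show that the rank of $T'_{x_i\gets\tilde a}$ is exactly $k$; otherwise $X$ would not have size equal to the rank. Changing the single variable $x_i$ changes two entries of the matrix, so the rank is at most $k+2$. Suppose that $\rank T'_{x_i\gets\tilde a}=k+2$ for some $\tilde a\neq a_i$. Then \Cref{2A}, applied with $\tilde a$ in the role of $a$, gives $\rank T'_{x_i\gets a'}=k+2$ for every $a'\neq a_i$. Taking $a'=a$ contradicts $\rank T'_{x_i\gets a}=k$. By evenness of rank, $\rank T'_{x_i\gets\tilde a}=k$.

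With the rank equal to $k$, the set $X$ is a size-$k$ non-singular principal submatrix avoiding $u$. The characterisation from the first step then gives $u\in D(T'_{x_i\gets\tilde a})$ for all $\tilde a\neq a_i$, which is the claim.
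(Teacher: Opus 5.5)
Your proof is correct and follows essentially the same route as the paper. In both, a size-$k$ principal set $X$ avoiding $u$ has a Pfaffian linear in $y$ that vanishes at $a_i$ but not at $a$, and this is combined with the upper bound $\rank T'_{x_i\gets\tilde a}\le k$. The only difference is how you get that upper bound: you invoke \Cref{2A} together with evenness and the rank-$2$ perturbation bound, whereas the paper argues directly that the Pfaffian of any larger non-singular principal set would be a linear polynomial vanishing at the two distinct points $a$ and $a_i$.
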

\begin{proof}
	From the given assumptions, we have a $X_u\subseteq V,|X_u|=k$ such that $T'_{x_i\gets a}[X_u]$ is non-singular and $u\notin X_u$. Since $u\notin D(T')$, $T'[X_u]$ is singular.
	\begin{claim}
		$\forall \tilde a\ne a_i$, $\rank T'_{x_i\gets \tilde a}\le k$.
	\end{claim}
	\begin{proof}
		Suppose $\rank T'_{x_i\gets \tilde a}>k$ for some $\tilde a$. Therefore, $\tilde a\notin\{ a,a_i\}$. Let $X$ be the principal minor of size larger than $k$ such that $T'_{x_i\gets \tilde a}[X]$ is non-singular. But then $\wtL (y)=\Pf(T'_{x_i\gets y}[X])$ is linear and $\wtL(a)=\wtL(a_i)=0$. Therefore, $\wtL(\tilde a)=0$ gives a contradiction.
	\end{proof}
	 Again $L(y)=\Pf (T'_{x_i\gets y}[X_u])$ is linear in $y$, such that $L(a)\ne 0, L(a_i)=0$.
	 Therefore, $\forall \tilde a\ne a_i$ we have $L(\tilde a)\ne 0$ i.e $T'_{x_i\gets \tilde a}[X_u]$ is non-singular and by the above claim $X_u$ is of maximum possible size (i.e. $\rank T'_{x_i\gets \tilde a}=|X_u|$), thus $u\in D(T'_{x_i\gets \tilde a})$.
\end{proof}

\subsection{A compress or compute algorithm}



Now we show that the size of a maximum matching in a general graph can be computed in $\CL$. Let $G$ be a graph on $n$ vertices and $m$ edges, and let $T$ be its Tutte matrix with variables $x_1,\ldots,x_m$ corresponding to the edges.

We sample an assignment $a_1,\ldots,a_m$ for these variables from the catalytic tape. Either this assignment maximizes the rank of the evaluated matrix $T'$, or it does not. We can verify this using \Cref{alg1}. If the rank is indeed maximized, we proceed to restore the modified catalytic tape.

Otherwise, if the rank is not maximized, we consider two cases:
\begin{itemize}
    \item[(a)] Suppose that by perturbing one of the variables $x_i$, the rank of $T'$ can be increased. In this case, by \Cref{2A}, for all values other than $a_i$, the rank increases. Therefore, we discard $a_i$ from the catalytic tape and instead store the index $i$ along with the rank. Later, we can recover $a_i$ as the unique value for which the rank remains unchanged.
    
    \item[(b)] Otherwise, perturbing some variable $x_i$ does not increase the rank, but enlarges the set $D(T')$, i.e., there exists a vertex $u$ that gets added to $D(T')$. By \Cref{2B}, $a_i$ is the unique value for which $u \notin D(T')$. Thus, we again discard $a_i$ and store its index $i$ together with $u$. As before, $a_i$ can be recovered when needed using this characterization.
\end{itemize}

Finally, if the sampled assignment already yields the maximum rank, we are done. Otherwise, we have freed enough space to recompute the rank from scratch using the algorithm of \cite{Geelen2000}. A formal proof follows.

\begin{algorithm}
	\caption{ProcessAssignment$(\mcC,T,G=(V,E))$}\label{alg2}
	\begin{algorithmic}[1]
		\State Input: We are given $\mcC=(a_1,\ldots,a_m)$ as an initial catalytic string, where $a_i\in [n^{10}]\forall i\in E$. 
		\State $T'\gets T_{\forall i\in E: x_i\gets a_i}$, $k\gets \rank T'$
	    \If {IsMaxRank$(T',T,G)$}\Comment{Call \Cref{alg1}}
		\State \Return $\nu(G)=\dfrac k 2, D(G)=D(T')$
		\EndIf
		\State $case\gets 2$
		\For {$i\in E, a\in [n^{10}]$}
		\State $T''\gets T'_{x_i\gets a}$
		\If {$\rank T''=k+2$}
		\State $case\gets 2A$
		\State Reset $\mcC\gets (2A,i,k,a_1,\ldots,a_{i-1},a_{i+1},\ldots,a_m)$.
		\State \Return \Comment{Terminate the algorithm}
		\EndIf
		\EndFor
		\For {$i\in E, a\in [n^{10}]$}
		\State $T''\gets T'_{x_i\gets a}$
		\If {$(\rank T''=k)\wedge (D(T')\subsetneq  D(T''))$}
		\State $case\gets 2B$
		\State Let $u\in D(T'')\setminus D(T')$ \Comment{Pick any such $u$ arbitrarily}
		\State Reset $\mcC\gets  (2B,i,u,a_1,\ldots,a_{i-1},a_{i+1},\ldots,a_m)$.
		\State \Return \Comment{Terminate the algorithm}
		\EndIf
		\EndFor
	\end{algorithmic}
\end{algorithm}
\begin{algorithm}
	\caption{RestoreAssignment$(\mcC,T,G=(V,E))$}\label{alg3}
	\begin{algorithmic}[1]
		\State Input: We are given $\mcC=(case,j,\beta,a_1,\ldots,a_{j-1},a_{j+1},\ldots,a_m), a_i\in [n^{10}]\forall i\in E$. 
		\State $T'\gets T_{\forall i\in E\setminus \{j\}: x_i\gets a_i}$
		\If {$case=2A$}
		\State $k\gets \beta$
		\For {$a\in [n^{10}]$}
		\If {$k=\rank T'_{x_j\gets a}$}
		\State Reset $\mcC\gets (a_1,\ldots,a_{j-1},a_j\gets a,a_{j+1},\ldots,a_m)$ 
		\EndIf
		\EndFor
		\Else
		\State $u\gets \beta$
		\For {$ a\in [n^{10}]$}
		\If {$u\notin D(T'_{x_j\gets a})$}
		\State Reset $\mcC\gets (a_1,\ldots,a_{j-1},a_j\gets a,a_{j+1},\ldots,a_m)$ 
		\EndIf
		\EndFor
		\EndIf
	\end{algorithmic}
\end{algorithm}
\begin{theorem}\label{thm:mm}
	Given a simple graph $G=(V,E)$ on $n$ vertices and $m$ edges, we can in $\CL\P$ find the size of the maximum matching, $\nu(G)$ of $G$.
\end{theorem}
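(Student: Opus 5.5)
The plan is the standard compress-or-compute scheme, built from \Cref{alg2} and \Cref{alg3}. We read the first $m\lceil 10\log n\rceil$ bits of the catalytic tape as an assignment $\mcC=(a_1,\ldots,a_m)$ with $a_i\in[n^{10}]$; bit strings outside this range are handled by a fixed bijective encoding or by leaving those blocks untouched. We then run \Cref{alg2}.

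\textbf{Success branch.} If IsMaxRank succeeds, the lemma following \Cref{alg1} together with \Cref{geelen} gives $\rank T'=\rank T=2\nu(G)$. We output $k/2$. The tape was only read in this branch, so nothing needs restoring.

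\textbf{Compression branch.} Otherwise \Cref{geelen} guarantees that case $2A$ or case $2B$ occurs, and \Cref{alg2} overwrites $\mcC$ with $(case,i,\beta,a_1,\ldots,\widehat{a_i},\ldots,a_m)$, where $\beta=k$ or $\beta=u$. The header $(case,i,\beta)$ needs only $O(\log n)$ bits, while the deleted $a_i$ occupied $10\log n$ bits. The gap, roughly $9\log n-O(1)$ bits, is therefore freed. This is where the compression is shown to be reversible.
\begin{itemize}
\item In case $2A$, \Cref{2A} shows $a_i$ is the \emph{unique} value $y$ with $\rank T'_{x_i\gets y}=k$.
\item In case $2B$, \Cref{2B} shows $a_i$ is the unique value with $u\notin D(T'_{x_i\gets y})$.
\end{itemize}
So \Cref{alg3} recovers exactly $a_i$, and the tape returns to its original contents. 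Each test inside these algorithms is a rank or $D$-membership computation. By \Cref{rmk:remark1} these are in $\TC^1\subseteq\CL$, and in fact in $\CLP$, since determinant/rank is in $\TC^1$ and \cite{CookLiMertzPyne2025} gives $\CLP=\CL\cap\P$. Each test is performed in a nested fashion, and each nested call restores its own tape region. There are polynomially many tests, so both algorithms run in $\CLP$.

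\textbf{Main obstacle.} The hardest step is using the freed space to compute $\nu(G)$ when only $O(\log n)$ clean bits are gained, while running Geelen's full procedure would seem to require storing a whole assignment. I would try the following. Geelen's algorithm from \cite{Geelen2000} increases the potential $(\rank,|D|)$ at most $O(n^2)$ times. Rather than freeing bits once, I would iterate the compression on the remaining assignment. After each round we obtain a new tuple $T'$ with one entry replaced by a reconstructible value, and the step $T'\prec T'_{x_i\gets a}$ strictly increases the potential. We then apply \Cref{alg2} again to the improved assignment, whose entries are the stored $a_j$'s except for a few changed ones that are recorded in clean space. Since the potential strictly increases, after at most $O(n^2)$ rounds we either reach maximal rank, at which point we output $\nu(G)$, or we accumulate enough freed space. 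To keep the bookkeeping logspace-sized, I would rather free polynomially many blocks by using many disjoint assignments from the catalytic tape (polynomially many copies). Each failed copy frees about $9\log n$ bits, so after polynomially many failures we have $\mathrm{poly}(n)$ free bits. In that space we run Geelen's greedy algorithm directly in polynomial time, record $\nu(G)$ in the work tape, and then restore the copies in reverse order with \Cref{alg3}. The total time is polynomial, which gives the $\CLP$ bound of \Cref{thm:mm}.
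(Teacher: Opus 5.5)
Your final plan is exactly the paper's proof: use polynomially many disjoint assignment blocks (the paper takes $N=n^3$), compress each failing block via \Cref{2A}/\Cref{2B}, and if every block fails, run a standard polynomial-time algorithm in the freed space and then restore with \Cref{alg3}. The single-assignment iteration idea you set aside is not needed. Two small corrections: when some block $r$ succeeds, the earlier compressed blocks $\mcC_1,\ldots,\mcC_{r-1}$ still have to be restored (the paper jumps to its restore step with $index\gets r-1$), so ``nothing needs restoring'' holds only for the first block; and the saving per block is about $7\log n$ bits rather than $9\log n$, since the index $i\in[m]$ alone costs up to $2\log n$ bits.
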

\begin{proof}
	Let $x_1,\ldots,x_m$ be edge variables. Consider the Tutte matrix $T$, given by 
\[
T_{uv} = \begin{cases}
	\; x_{i} & u<v,\ i=\{u,v\}\in E,\\[2mm]
	\; -x_{i} & u>v,\ i=\{u,v\}\in E,\\[2mm]
	\; 0 & \text{otherwise.}
\end{cases}
\] 
Assume that the catalytic tape is divided into blocks $(\mcC_1,\ldots,\mcC_N,\mcB)$ where $\mcC_r=(a_1,\ldots,a_m)\forall r\in [N]$ with each $a_i\in[n^{10}]$ is $10\log n$ bits long, and $\mcB$ is an extra block which shall be used for rank computations; $N$ is $n^3$. The algorithm proceeds as follows: We iterate over all $\mcC_1,\ldots,\mcC_N$, suppose we are processing $\mcC_r=(a_1,\ldots,a_m)$, then we follow the following steps (as described in \Cref{alg2}):
\begin{enumerate}
	\item Let $T'$ be the evaluation of $T$ given by $\mcC_r$. We execute \Cref{alg2} which either reports $\nu(G)$ or not. If we get $\nu(G)$ from this call, then we proceed to call \Cref{alg3} on all previous catalytic blocks, i.e. jump to Step $4$ with the temporary variable $index\gets r-1$. Now suppose that \Cref{alg2} does not return $\nu(G)$. Therefore, from \Cref{geelen}, the only two possibilities are:
	\begin{enumerate}
		\item[$(a)$] $\exists i\in E, a\in [n^{10}]$ such that $\rank T'_{x_i\gets a}=\rank T'+2$.
		\item[$(b)$] $\exists i\in E, a\in [n^{10}]$ such that $\rank T'_{x_i\gets a}=\rank T'\wedge D(T')\subsetneq D(T'_{x_i\gets a})$.
	\end{enumerate}Consider the two cases $2A,2B$ that deals with $(a)$ and $(b)$ respectively, as follows--
	\item[$2A$.] In this case, after finding an $i$ as in $(a)$, from \Cref{2A}, we know that for no other substitution $x_i\gets a$ and $a\ne a_i$, is $\rank T'_{x_i\gets a}=\rank T'$. Therefore, we replace $\mcC_r$ with $(2A,i,\rank T',a_1,\ldots,a_{i-1},a_{i+1},\ldots,a_m)$. This saves $10m\log n-\mcO(1)-\log m-\log n-10(m-1)\log n> 6\log n$ (here $\mcO(1)$ comes from writing $2A$).
	\item[$2B$.] Again, first we find an index $i$, and value $a$ witnessing case $(b)$. Now, we pick a $u\in D(T'_{x_i\gets a})\setminus D(T')$. Again from \Cref{2B} we know that $a_i$ is the unique value for which $D(T')$ does not contain $u$. Hence we replace $\mcC_r$ with $(2B,i,u,a_1,\ldots,a_{i-1},a,a_{i+1},\ldots,a_m)$. This again saves $10m\log n-\mcO(1)-\log m-\log n-10(m-1)\log n>6\log n$ bits.
	\item[$3.$] Suppose we just processed $\mcC_N$ and still did not find $\nu(G)$. In this case, we have freed up at least $6N\log n=\mcO(n^3\log n)$ space for $N=n^3$. We shift the catalytic strings to make this space contiguous, and run any standard polynomial time algorithm for matching and find $\nu(G)$. Next set $index=N$ and go to step $4$.
	\item[$4.$] We have either reached this step if we have processed all catalytic blocks, then $index=N$, and otherwise we have jumped from step $1$ with some value of $index$. In either case, we now recompute the modified catalytic strings $\mcC_1,\ldots,\mcC_{index}$. For each $r\in[index]$, we execute \Cref{alg3} on $\mcC_r=(case,j,\beta,a_1,\ldots,a_{j-1},a_{j+1},\ldots,a_m)$. Here is how it works: 
	\begin{itemize}
		\item[$(a)$] This is when $case=2A$. Therefore, we have $k=\beta$ to be the rank of $T'$ with the evaluation of $T$ given by the original catalytic tape (before we made any modified any catalytic bits). Therefore, we find the unique $a$ such that $\rank T'_{x_j\gets a}=k$, and restore the catalytic tape by setting $a_j\gets a$. The correctness is again given by \Cref{2A}.
		\item[$(b)$] This time $case=2B$. Therefore $u=\beta$. Again find the unique $a$ such that $u\notin D(T'_{x_j\gets a})$ and restore $a_j\gets a$. The correctness follows from \Cref{2B}.
	\end{itemize}
\end{enumerate}
The above algorithm clearly runs in polynomial time and the catalytic space used is $\mcO(mn^2\log n+|\mcB|)=\poly(n)$ with work space $\mcO(\log n)$, and the catalytic tape is always restored when the algorithm halts. Thus the above gives the required $\CL\P$ algorithm to compute $\nu(G)$.
\end{proof}
\begin{corollary}\label{gallai}
	Given a simple graph $G=(V,E)$, we can in $\CL\P$ compute the Gallai-Edmonds Decomposition of $G$.
\end{corollary}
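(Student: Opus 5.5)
The plan is to reuse the compress-or-compute procedure from the proof of \Cref{thm:mm} unchanged, and only change what is written to the output tape. In that procedure, every halting branch computes $D(G)$, not only $\nu(G)$.

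\emph{Branch 1: a good block is found.} Suppose some block $\mcC_r$ gives an evaluation $T'$ with IsMaxRank$(T',T,G)$ true. Then \Cref{geelen} gives $\rank T'=\rank T$ and $D(T')=D(T)=D(G)$. So $D(G)$ can be listed by testing, for each vertex $x$, whether $\rank T'\setminus\{x\}=\rank T'$. Each such test is a rank computation, which is in $\CL$ by \Cref{rmk:remark1}, and it uses the block $\mcB$ in the same catalytic manner as before.

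\emph{Branch 2: all $N$ blocks are compressed.} If instead all $N$ blocks are compressed, we have $\Omega(n^3\log n)$ free bits. That is polynomially many bits, which is enough to run any deterministic polynomial-time algorithm (e.g.\ Edmonds' blossom algorithm). We use it to find a maximum matching $M$, and then compute $D(G)$ by the standard characterization: $x\in D(G)$ iff $\nu(G-x)=\nu(G)$. Alternatively, we can read $D(G)$ off from the alternating-path structure from $M$-exposed vertices.

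Once $D(G)$ is known, the remaining parts are logspace computable from the input and $D(G)$:
\[
A(G)=\{v\in V\setminus D(G): v \text{ has a neighbour in } D(G)\},
\]
and $C(G)=V\setminus(A(G)\cup D(G))$. In Branch 1, $D(G)$ is never stored. Instead it is recomputed membership by membership, and $A(G)$ and $C(G)$ are evaluated by nested loops over vertices and neighbours, with each membership query answered by a fresh $\CL$ rank computation. Composing $\CL$ subroutines with $O(\log n)$ extra workspace stays in $\CL\P$, because each call restores $\mcB$ before the next. In Branch 2, everything is written into the freed space and output directly. Finally, Step 4 of the proof of \Cref{thm:mm} restores all compressed blocks via \Cref{alg3}, with correctness again given by \Cref{2A} and \Cref{2B}.

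The main obstacle is the \emph{output/restoration interleaving}. In Branch 1 we must emit the decomposition before running the restoration. This is fine: the output tape is write-only and separate from the catalytic tape, and the block $\mcC_r$ itself is unmodified. So we can emit while $\mcC_1,\dots,\mcC_{r-1}$ are still compressed, since $T'$ depends only on $\mcC_r$. The other point to check is that the polynomial time bound holds. It does, because there are $O(n)$ membership queries, each a polynomial-time $\CL$ rank computation.
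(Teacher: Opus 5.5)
Your proposal is correct and follows the paper's proof. On a good block you read off $D(G)=D(T')$ via \Cref{geelen}; otherwise you compute $D(G)$ in the freed space; in both cases $A(G)$ and $C(G)$ then follow from $D(G)$ by a logspace computation. Using Edmonds' blossom algorithm with the test $\nu(G-x)=\nu(G)$ in place of Geelen's algorithm, and your extra care about recomputing $D(G)$ membership on the fly and emitting output before restoration, are harmless refinements of the same argument.
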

\begin{proof}
	Notice that in the above algorithm, if we succeed in finding a `good' catalytic tape (in \Cref{alg2}), then we have also found $D(T)$ which is equal to $D(G)$. In case we do not find a `good' catalytic tape, we have saved enough free space to compute $D(G)$ from scratch using \cite{Geelen2000}. After computing $D(G)$ it is easy to get $A(G)$ and $C(G)$ as $A=(V\setminus D)\cap N_G[D], C=V\setminus(A\cup D)$ where $N_G[D]$ is the set of neighbors of $D(G)$ in $G$.
\end{proof}
\begin{corollary}\label{trank}
	Given a simple graph $G$ and its Tutte matrix $T$, in $\CL\P$ we can find an assignment $T'$ of $T$ that maximizes the rank of $T$.
\end{corollary}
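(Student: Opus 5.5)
We follow the compress-or-compute scheme of \Cref{thm:mm}, but now the required output is an entire assignment $T'$ (a vector $(b_1,\ldots,b_m)\in[n^{10}]^m$), not the single number $\nu(G)$. The only difficulty is how to write this assignment to the output tape, since the good assignment we find lives on the catalytic tape and must be restored there afterwards. The plan is to \emph{stream} the output. As soon as a good block is identified, its entries are written to the write-only output tape, and only then is the tape restored.

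The run is as follows. Split the catalytic tape into blocks $\mcC_1,\ldots,\mcC_N,\mcB$ as in the proof of \Cref{thm:mm}, with $N=n^3$. Process the blocks in order with \Cref{alg2}. A block $\mcC_r$ that is not rank-maximizing is compressed into the form $(2A,i,k,\ldots)$ or $(2B,i,u,\ldots)$, which frees $\Omega(\log n)$ bits.

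Suppose some block $\mcC_r=(a_1,\ldots,a_m)$ passes IsMaxRank (\Cref{alg1}). By \Cref{geelen}, the evaluation $T'$ given by $\mcC_r$ then satisfies $\rank T'=\rank T$. Block $\mcC_r$ itself has not been modified, so we simply copy $a_1,\ldots,a_m$ to the output tape, using $\mcO(\log n)$ work space for the index. We then run \Cref{alg3} on $\mcC_1,\ldots,\mcC_{r-1}$ to restore them, exactly as in step 4 of \Cref{thm:mm}. Correctness of the restoration follows from \Cref{2A} and \Cref{2B}.

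Suppose instead that all $N$ blocks are compressed. Then we have $\Omega(n^3\log n)$ contiguous free bits. In this space we run Geelen's deterministic polynomial-time algorithm \cite{Geelen2000} to completion. Starting from any evaluation, it repeatedly applies a substitution $x_i\gets a$ guaranteed by \Cref{geelen} that strictly increases the evaluation in the preorder $\preceq$. A chain in this preorder has length at most $\mcO(n^2)$: the rank rises at most $n/2$ times, and between rank increases $D$ grows at most $n$ times. So the process stops within polynomially many steps. Each step only needs the current assignment, i.e.\ $m\cdot 10\log n\le \mcO(n^2\log n)$ bits, which fits in the freed space. The final assignment has maximum rank; we write it to the output tape and then restore all blocks via \Cref{alg3}.

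The main point needing care is the space accounting in this last case. The freed space must hold a full $m$-entry assignment plus the bookkeeping for the rank and $D$ computations. The rank and $D$ tests themselves run in $\CL$ using block $\mcB$, or even in plain polynomial space now that it is available. Since $6N\log n\ge 6n^3\log n\gg 10m\log n$, this holds comfortably. Time is polynomial because there are $N$ blocks, each processed with polynomially many rank computations, plus one polynomial-time run of Geelen's algorithm. Finally, the catalytic tape is restored in every branch, so the procedure is a $\CL\P$ algorithm that outputs an evaluation $T'$ with $\rank T'=\rank T$.
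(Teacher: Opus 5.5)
Your proposal is correct and follows the paper's own argument: reuse the compress-or-compute run of \Cref{thm:mm}, take the unmodified good block itself as $T'$ when one passes IsMaxRank, and otherwise run Geelen's algorithm \cite{Geelen2000} in the freed space before restoring via \Cref{alg3}. The paper leaves three details implicit that you make explicit, and all of them are sound: streaming the assignment to the output tape before restoring, the $\mcO(n^2)$ bound on the length of the improvement chain, and the space accounting.
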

\begin{proof}
	In the above algorithm, if we find a good catalytic tape via \Cref{alg2}, then the catalytic bits themselves give an evaluation $T'$ that is of the highest possible rank. In case we have not found any `good' catalytic tapes, we have freed up enough space to run Geelen's algorithm \cite{Geelen2000} directly.
\end{proof}
\begin{remark}\label{rem:mm}
	Notice that from \Cref{trank} we get a $\CL$ algorithm that finds an evaluation $T'$ of $T$ in the following sense:
	\begin{itemize}
		\item Either the program saves enough space, and finds $T'$, and can later restore the catalytic tape whenever necessary.
		\item Or, it finds the evaluation $T'$ from a given string (read as assignments to the variables) on the catalytic tape.
	\end{itemize}
	This shall be used implicitly when we prove \Cref{pms,thm:smm}.
\end{remark}

\section{Hunt for a Maximum Matching}
Finally we show how to search for a maximum matching in $\CL$. We break it down into two steps, the second step being a standard reduction, but the first step is fairly technical and uses new ideas:
\begin{enumerate}
    \item First, we show how to search for a perfect matching in a general graphs, given that the graph contains a perfect matching. We shall crucially use the fact that we can find an assignment to the Tutte matrix that achieves full rank, \Cref{trank}.
    \item Next we give a procedure to find a maximum matching in general graphs, by using $(1)$.
\end{enumerate}
\subsection{Perfect Matching Search in general graphs}
Throughout this section, we assume that we are given a simple graph $G=(V,E)$, that has a perfect matching.

\paragraph*{Notation}

For a weight assignment $W:E\to \bbN$ on the edges and an evaluated Tutte matrix $T'$, define $T'\circ W$ as the skew-symmetric matrix:
\[
(T'\circ W)_{uv} = \begin{cases}
	\; T'_{uv}\cdot z^{W(uv)} & \{u,v\}\in E,\\[2mm]
	\; 0 & \text{otherwise.}
\end{cases}
\]

Let $T'$ be an evaluation of $T$ such that $\rank T'=n$, and let $W$ be a weight assignment on the edges.
Denote $A=T'\circ W$, and for any edge $e=(i,j)$, denote $A_e$ as the matrix obtained from $A$ by multiplying $A_{ij},A_{ji}$ with the variable $y$ and setting $W(e)=0$, i.e. $(A_e)_{ij},(A_e)_{ji}$ are $T'_{ij}y$ and $T'_{ji}y$ respectively.

Let $\mcM$ be the set of all perfect matchings of $G$. For a perfect matching $M\in\mcM$ in $G$, denote by $c_M$ the coefficient\footnote{i.e. Let $T'_X$ be the same matrix as $T'$ but with $T'_{ij},T'_{ji}$ multiplied by $x_{ij}$ if $(i,j)\in E$. Now let $x_M=\Pi_{e\in M}x_e$. Then $c_M$ is the coefficient of $x_M$ in $\Pf(T'_X)$.} of the monomial corresponding to $M$ in $\Pf (T')$. 
Therefore, we have 
\[
\Pf(A)=\sum_{M\in\mcM}c_M\cdot z^{W(M)}
\]
Observe that we can write $\Pf(A_e)=P_0(z)+yP_1(z)$ where $P_0,P_1$ are univariate polynomials in $z$.
Furthermore, for $w\in \bbN$, let $$c_w=\sum_{M:W(M)=w}c_M,$$ and for $e\in E$: $$c_{w}^{\bar e}=\sum_{M:W(M)=w,e\notin M}c_M\text{ and }c_w^e=c_w-c_w^{\bar e}=\sum_{M:W(M)=w,e\in M}c_M.$$ Therefore, we have that:
$$
P_0(z)=\sum_{w\in\bbN}c_w^{\bar e}z^w, P_1(z)=\sum_{w\in\bbN}c_{w+W(e)}^ez^w
$$

We know that $T'$ is full rank, hence $\Pf (A)$ is a non-zero polynomial in $z$, since at $z=1$, $\Pf(A)|_{z=1}=\Pf(T')\ne 0$. Let $c_{w_0}z^{w_0}$ be the minimum degree term of $\Pf (A)$ with the non-zero coefficient $c_{w_0}$. For an edge $e\in E$, let $c_{w_{\bar e}}^{\bar e} z^{w_{\bar e}}$  denote the minimum degree monomial in $P_0$ not containing the variable $y$ (if such a monomial exists, i.e. if $P_0$ is a non-zero polynomial in $z$), and $c_{w_e+W(e)}^e z^{w_e}$ be the minimum degree monomial containing $y$ in $P_1$ (if it exists, i.e. if $P_1$ is non-zero).

\paragraph*{A high level overview}
 Notice that $w_0$ may not be the weight of the min-weight perfect matching in $G$, since the corresponding term for the minimum weight matching may cancel out in $\Pf(A)$. But, nevertheless, we attempt to isolate a matching of weight ${w_0}$, though not in the conventional sense i.e. there may be multiple matchings with this weight. Notice that if $W$ isolates a unique minimum weight perfect matching of $G$ with total weight $w^\ast$, then the set $\{e\in E:c_{w^\ast}^{\bar e}=0\}$ is precisely this unique perfect matching. But, in our context there might not be a unique $w_0$ weight matching, but nevertheless, we show that either of the two holds:
 \begin{enumerate}
     \item the set $\{e\in E:c_{w_0}^{\bar e}=0\}$ is a perfect matching of weight $w_0$.
     \item We can find an edge $e\in E$, such that $W(e)=w_{\bar e}-w_e$, and using this we shall develop a `compress or compute' algorithm.
 \end{enumerate}
 We make these notions formal in the next lemma. The following is the crucial lemma that helps us find a perfect matching in $\CLP$. can

\begin{lemma}\label{unique}
	With the above notation, let $w_0$ be the least degree such that $z^{w_0}$ is present with non-zero coefficient in $\Pf(A)$. Then, one of the following holds:
	\begin{enumerate}
		\item $\exists e\in E$ such that $W(e)=w_{\bar e}-w_e$ where $z^{w_{\bar e}}$ is the least degree monomial with non-zero coefficient in $P_0$ and $z^{w_{ e}}$ is the least degree monomial with non-zero coefficient in $P_1$ and $\Pf(A_e)=P_0+yP_1$.
		\item $M=\{e\in E: c_{w_0}^{\bar e}=0\}$ is a perfect matching. Moreover $M$ is exactly the set of edges $e\in E$ such that $P_0$ does not contain the monomial $z^{w_0}$.
	\end{enumerate}
\end{lemma}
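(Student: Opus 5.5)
The plan is to assume that alternative 1 fails for every edge and to derive alternative 2. The starting point is that substituting $y=z^{W(e)}$ into $A_e$ recovers $A$. Hence, for every $e\in E$,
\[
\Pf(A)=P_0(z)+z^{W(e)}P_1(z),
\]
which coefficientwise is just the identity $c_w=c_w^{\bar e}+c_w^{e}$.

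\textbf{No cancellation at the lowest degree.} Fix an edge $e$. The lowest degree of $P_0$ is $w_{\bar e}$, and the lowest degree of $z^{W(e)}P_1$ is $w_e+W(e)$. If one of $P_0,P_1$ is zero, I set the corresponding degree to $+\infty$. Both cannot be zero, since $\Pf(A)\neq 0$ because $T'$ has full rank.

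Since alternative 1 fails, $w_{\bar e}\neq w_e+W(e)$. So the two lowest-degree terms cannot cancel, and
\[
w_0=\min\bigl(w_{\bar e},\,w_e+W(e)\bigr).
\]
This gives a clean dichotomy for each edge $e$:
\begin{itemize}
\item either $w_{\bar e}=w_0<w_e+W(e)$, so that $c^{e}_{w_0}=0$ and $c^{\bar e}_{w_0}=c_{w_0}\neq 0$;
\item or $w_e+W(e)=w_0<w_{\bar e}$, so that $c^{\bar e}_{w_0}=0$ and $c^{e}_{w_0}=c_{w_0}\neq0$.
\end{itemize}
Thus $M=\{e: c^{\bar e}_{w_0}=0\}$ is exactly the set of edges with $c^e_{w_0}=c_{w_0}$, and every other edge has $c^e_{w_0}=0$.

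\textbf{$M$ is a perfect matching.} Fix a vertex $v$. Every perfect matching contains exactly one edge incident to $v$. Partitioning the matchings of weight $w_0$ according to that edge gives
\[
\sum_{e\ni v} c^e_{w_0}=c_{w_0}.
\]
By the dichotomy, each summand is either $0$ or $c_{w_0}$. So the left side equals $k_v\,c_{w_0}$, where $k_v=|\{e\in M: v\in e\}|$. This gives $(k_v-1)c_{w_0}=0$, and since $c_{w_0}\neq 0$, we get $k_v=1$. Hence every vertex is covered by exactly one edge of $M$, i.e.\ $M$ is a perfect matching.

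This step needs $k_v-1\neq 0$ in the field. That holds over $\bbQ$, or in any characteristic larger than $n$, since $k_v\le n$. I would state this assumption on the field explicitly.

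\textbf{The ``moreover'' clause.} This is immediate: $c^{\bar e}_{w_0}$ is by definition the coefficient of $z^{w_0}$ in $P_0$. So $M$ is exactly the set of edges $e$ for which $P_0$ lacks the monomial $z^{w_0}$.

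\textbf{Expected difficulty.} The main care point is the non-cancellation step when $P_0$ or $P_1$ is the zero polynomial, for instance when $e$ lies in every or in no perfect matching with nonzero coefficient. Here the convention $w=+\infty$, together with reading alternative 1 as requiring both degrees to exist, keeps the dichotomy valid. The characteristic caveat in the vertex-counting step is the only other subtlety.
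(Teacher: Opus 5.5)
Your proof is correct and is essentially the paper's argument. The paper's Case 1 (if $c^e_{w_0}$ and $c^{\bar e}_{w_0}$ are both nonzero then $w_{\bar e}=w_e+W(e)$) is the contrapositive of your non-cancellation step, and its Case 2 is your vertex count $(k_v-1)c_{w_0}=0$, split into degree-$0$ and degree-$\ge 2$ subcases (it silently divides by $\ell-1$ where you state the characteristic assumption explicitly).
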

\begin{proof}
 Notice that $c_{w_0}\ne 0$ and $\forall w<w_0, c_{w}=0$. We divide the proof into the following two mutually exclusive cases:

 \texttt{Case 1.} $\exists e\in E$, such that both $c_{w_0}^e\ne 0$ and $c_{w_0}^{\bar e}\ne 0$.

 \texttt{Case 2.} $\forall e\in E$, at least one of $c_{w_0}^e$ and $c_{w_0}^{\bar e}$ is $0$.

 We will show that if Case $1$ holds, then for the same edge $e$ such that $c_{w_0}^e\ne 0$ and $c_{w_0}^{\bar e}\ne 0$ holds, we have the equality $W(e)=w_{\bar e}-w_e$. And if Case $2$ holds, we shall show that $M=\{e\in E: c_{w_0}^{\bar e}=0\}$ is a perfect matching in $G$. We start with the first case.

 \texttt{Case 1.} 
 Suppose that there exists an edge $e\in E$ such that $c_{w_0}^e$ and $c_{w_0}^{\bar{e}}$ both are nonzero. Write as before

$$
\Pf\left(A_e\right)=P_0(z)+y P_1(z)
$$

and $w_{\bar{e}}, w_{e}$ are the minimum degrees in $P_0, P_1$ respectively. Since $c_{w_0}^e$ and $c_{w_0}^{\bar{e}}$ both are nonzero, we have $w_{\bar e},w_e-W(e)\le w_0$. If $w_{\bar{e}}=w_{e}+W(e)$ then we are done. Else, $w_{\bar{e}}<w_e+W(e)$ or $w_{\bar{e}}>w_e+W(e)$ are the two cases we consider:
\begin{enumerate}
    \item In the first case, we have $w_{\bar e}<w_e+W(e)$, and therefore the coefficient of $z^{w_{\bar e}-W(e)}$ is zero in $P_1$, i.e, $c_{w_{\bar e}}^e=0$. Thus, observe that $c_{w_{\bar e}}=c_{w_{\bar e}}^e+c_{w_{\bar e}}^{\bar e}\ne 0$, implying that $w_{\bar e}\ge w_0$. Combining with the fact that $w_{\bar e}\le w_0$ we have $w_{\bar e}=w_0$. But then we have $c_{w_0}^e=0$ that contradicts our assumption on $e$.
    \item The second case is symmetric, with the assumption $w_{\bar{e}}>w_e+W(e)$. We again see that the coefficient of $z^{w_e+W(e)}$ in $P_0$ is zero, i.e, $c_{w_{ e}+W(e)}^{\bar e}=0$. This in turn means $c_{w_e+W(e)}\ne 0$, implying $w_e+W(e)\ge w_0$. Combining with the fact that $w_e+W(e)\le w_0$, we have the equality $w_e+W(e)=w_0$. But then we have $c_{w_0}^{\bar e}=0$ giving a contradiction.
\end{enumerate}
 We have therefore shown that $W(e)=w_{\bar e}-w_e$. Now we move on to the other case.

    \texttt{Case 2.} We now assume that $\forall e\in E$, at least one of $c_{w_0}^e$ and $c_{w_0}^{\bar e}$ is $0$. Let $M=\{e\in E:c_{w_0}^{\bar e}=0\}$. Since $c_{w_0}\ne 0$, we have $M= \{e\in E: c_{w_0}^e\ne 0\wedge c_{w_0}^{\bar e}=0\}$. We wish to show that $M$ is a perfect matching of $G$. We proceed by contradiction. Consider the graph $G'=(V, M)$. If $M$ is not a perfect matching, there is either a vertex with degree $0$ in $G'$, or some vertex has degree at least $2$ in $G'$. We divide the proof into these two cases and show a contradiction in both of them:
    \begin{enumerate}
        \item Suppose $v$ has degree $0$ in $G'$. Let $e_1,\ldots,e_k$ be the edges in $G$ that are incident on $v$. For all $e_i$ we know that $e_i\notin M\implies c_{w_0}^{ e_i}= 0$. But then we have 
        $$c_{w_0}=\sum_{i\in[k]}c_{w_0}^{e_i}=0$$ which is a contradiction, since $c_{w_0}\ne 0$.
        \item Now suppose a vertex $v$ has degree $\ell$ in $G'$ where $\ell>1$. Suppose $e_1,\ldots,e_k$ are the edges incident on  $v$ in $G$ and $e_1,\ldots,e_\ell\in M, e_{\ell+1},\ldots,e_k\notin M$. Therefore, we have the following:
        $$\forall i\in[\ell]: c_{w_0}^{\bar e_i}=0, \forall j>\ell: c_{w_0}^{e_j}= 0 $$
        and therefore, $$\forall i\in[\ell]: c_{w_0}^{\bar e_i}=\sum_{j\in[\ell]\setminus\{i\}}c_{w_0}^{e_j}=0.$$
        Summing the above for all $i\in[\ell]$, we get 
        $$\sum_{i\in[\ell]}c_{w_0}^{\bar e_i}=\sum_{i\in[\ell]}\sum_{j\in[\ell]\setminus\{i\}}c_{w_0}^{e_j}=(\ell-1)\sum_{i\in[\ell]}{c_{w_0}^{e_i}}=0.$$
 But since $\ell>1$, we have: $$c_{w_0}=\sum_{i\in[\ell]}c_{w_0}^{e_i}=0.$$
 Again this gives a contradiction since $c_{w_0}\ne 0$.
        
    \end{enumerate}
    
By the above argument, we have that every vertex has degree exactly $1$ in $G'$. This happens precisely when $M$ is a perfect matching of $G$, and we are done.

\end{proof}

Finally, we come to the following algorithm:

\begin{lemma}\label{pms}
	There is a $\CL\P$ procedure that, given a simple undirected graph $G$ on $n$ vertices and $m$ edges which contains a perfect matching, returns a perfect matching of $G$.
\end{lemma}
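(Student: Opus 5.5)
We combine \Cref{trank} with \Cref{unique} in a compress-or-compute scheme, following the proof of \Cref{thm:mm}. First, by \Cref{trank} and \Cref{rem:mm}, we obtain an evaluation $T'$ of the Tutte matrix with $\rank T'=n$. Either $T'$ is read directly from a catalytic block, or enough space has already been freed to find a perfect matching in polynomial time and space, in which case we do so and restore. So assume $T'$ is available implicitly. We then split the remaining catalytic tape into blocks $\mcW_1,\ldots,\mcW_N$ with $N=n^3$. Each block $\mcW_r$ is read as a weight assignment $W_r:E\to[n^{c}]$ for a suitable constant $c$.

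To process a block $W=W_r$, we work with $A=T'\circ W$ and, for each edge $e$, with $A_e$. The entries of $\Pf(A)$ and $\Pf(A_e)=P_0+yP_1$ are polynomials in $z$ of degree at most $n\cdot n^c$. Their coefficients can be obtained in $\TC^1\subseteq\CL$, for instance by evaluating the Pfaffian (via $\sqrt{\Det}$, or directly by a $\TC^1$ Pfaffian algorithm) at polynomially many points and interpolating. From these coefficients we compute $w_0$, and for each $e$ we compute $w_{\bar e}$ and $w_e$ together with the coefficient $c^{\bar e}_{w_0}$.

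We first test case $2$ of \Cref{unique}. We form $M=\{e: z^{w_0}\text{ absent from }P_0\}$ and check in logspace that $M$ is a perfect matching of $G$. If it is, we output $M$ and then restore all earlier blocks. If it is not, \Cref{unique} guarantees an edge $e$ with $W(e)=w_{\bar e}-w_e$. The key observation is that $P_0$ and $P_1$ do not depend on $W(e)$: by construction $A_e$ sets the weight of $e$ to $0$ and carries $y$ instead. Hence $w_{\bar e}$ and $w_e$ are computable from the other weights alone. We therefore compress $\mcW_r$ to $(e,W_{-e})$, deleting the $c\log n$ bits of $W(e)$ and writing $\log m\le 2\log n$ bits for the index. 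With $c$ large enough, this saves $\Omega(\log n)$ bits per block.

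Restoration of a compressed block $(e,W_{-e})$ recomputes $P_0$ and $P_1$ from $T'$ and $W_{-e}$, and sets $W(e)=w_{\bar e}-w_e$. Since $T'$ must also be available during restoration, we restore the weight blocks in reverse order before the $T'$ block is restored, as in \Cref{thm:mm}. If all $N$ blocks compress, we have freed $\Omega(n^3\log n)$ contiguous bits. We then run a standard polynomial-time perfect matching algorithm such as Edmonds' in that space, output the matching, and restore. The total running time is polynomial.

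The step I expect to be the main obstacle is computing the low-order coefficients of the Pfaffians of these univariate polynomial matrices in $\CL$. This needs sign-consistent Pfaffians rather than determinants, and it needs an exact test of whether $c^{\bar e}_{w_0}$ vanishes. I would handle it by working over $\bbQ$, or over a large prime field where $T'$ stays full rank, and computing $\Pf$ by a $\TC^1$ algorithm combined with polynomial interpolation in $z$.

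A second subtlety is ensuring that the saved bits per block exceed the overhead. I would take $c=10$, so each compressed block saves at least $10\log n-2\log n-O(1)$ bits.
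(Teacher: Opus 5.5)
Your proposal is correct and follows essentially the same route as the paper: $n^3$ weight assignments read from the catalytic tape, a full-rank evaluation $T'$ obtained via \Cref{trank} in the sense of \Cref{rem:mm}, and \Cref{unique} used either to output $M=\{e: c^{\bar e}_{w_0}=0\}$ or to compress a block by dropping $W_r(e)$. The dropped weight is recoverable as $w_{\bar e}-w_e$, which does not depend on $W_r(e)$, and a standard algorithm is run in the freed space if all blocks compress. The differences are cosmetic: you test whether $M$ is a perfect matching first, verifying it explicitly, rather than first searching for a compressible edge, and you spell out the Pfaffian/interpolation computation and the order of restorations more carefully than the paper does.
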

\begin{proof}
Assume that the catalytic tape is given as $n^3$ weight assignments $(W_1,W_2,\ldots,W_{n^3})$ where each $W_r:E\to [n^{10}]$ is given as $(W_i(e_1),\ldots,W_i(e_m))$. We assume that we are given an assignment to the Tutte matrix $T'$ by \Cref{trank} in the sense of \Cref{rem:mm}. Then we do the following, by processing $W_1,\ldots,W_{n^3}$ sequentially:
	\begin{enumerate}
		\item Suppose, we are processing $W_r$. Set $A=T'\circ W_r$. For all $e\in E$ check whether $W_r(e)=w_{\bar e}-w_e$ where $z^{w_{\bar e}}$ is the least degree monomial with non-zero coefficient (can be found using interpolating) in $P_0$ and $z^{w_{ e}}$ is the least degree monomial with non-zero coefficient in $P_1$. (recall that $A_e$ is the matrix obtained from $A$ by multiplying $A_{ij},A_{ji}$ by the variable $y$ and setting $W(e)=0$, and $\Pf(A_e)=P_0+yP_1$) If we find such an edge $e_i$ (i.e. the $i^{th}$ edge), then reset the current catalytic tape string to $(i,W_r(e_1),\ldots,W_r(e_{i-1}),W_r(e_{i+1}),\ldots,e_m)$. This frees up $10\log n-\log m\ge8\log n$ bits of space. On the other hand, if we did not find such an edge, using \Cref{unique}, we find a perfect matching as follows:
		
		Compute $w_0$ to be the minimum degree such that $z^{w_0}$ is present in $\Pf(A)$. Find all edges $e$ such that $P_0$ does not contain the monomial $z^{w_0}$ where $\Pf(A_e)=P_0+yP_1$. This set of edges precisely forms the perfect matching $M=\{e\in E: c_{w_0}^{\bar e}=0\}$ by \Cref{unique}. After finding this matching and returning it, we move to step $3$ (to restore the previous catalytic strings) with the temporary variable $index\gets r-1$.
		\item Suppose that we just processed $W_{n^3}$ and still did not find a perfect matching. In this case, we have saved up $\mcO(n^3\log n)$ bits of space, where we can run any standard algorithm (say \cite{Geelen2000} with the self-reduction step) to search for a perfect matching. Then we set $index\gets n^3$ and go to step $3$ to restore the catalytic strings.
		\item We restore all the weights $W_1,\ldots,W_{index}$ as follows: Suppose we are restoring $W$. We are given the catalytic string $(i,W(e_1),\ldots,W(e_{i-1}),W(e_{i+1}),\ldots,W(e_m))$. Compute $A_{e_i}$ (can be done since we set $W(e_i)=0$). Compute $w_{\bar e_i},w_{e_i}$ such that $z^{w_{\bar e_i}}$ is the least degree monomial with non-zero coefficient in $P_0$ and $z^{w_{ e_i}}$ is the least degree monomial with non-zero coefficient in $P_1$. Set $W(e_i)=w_{\bar e_i}-w_{e_i}$. Reset $W$ to $(W(e_1),\ldots,W(e_m))$.
	\end{enumerate}
	Clearly we have always restored the catalytic tape to its initial state when our computation halts, and we always return a perfect matching. The space bounds are: $\mcO(\log n)$ workspace, $\mcO(n^3m\log n)$ catalytic space, and the algorithm runs in polynomial time.
\end{proof}

\subsection{Maximum Matching Search}
Finally, we have the main result of this section, as follows:
\begin{theorem}\label{thm:smm}
	Given a simple graph $G=(V,E)$, we can construct a maximum matching $M$ of cardinality $\nu(G)$ in $\CL\P$.
\end{theorem}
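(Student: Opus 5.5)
We reduce maximum matching to perfect matching, the standard reduction promised at the start of this section, and then invoke \Cref{pms}. First, by \Cref{thm:mm} we compute $k=\nu(G)$ in $\CL\P$. If $2k=n$, $G$ has a perfect matching and \Cref{pms} returns it. Otherwise let $d=n-2k$ be the deficiency. We build an auxiliary graph $H$ from $G$ by adding $d$ new vertices $s_1,\ldots,s_d$, each adjacent to every vertex of $V$ (and not to each other). We then check that $H$ has a perfect matching. Given a maximum matching $M$ of $G$, the $d$ uncovered vertices can be matched to $s_1,\ldots,s_d$ bijectively. Conversely, in any perfect matching $M'$ of $H$ the $s_j$'s cover exactly $d$ vertices of $V$, so $M'\cap E$ is a matching of $G$ of size $(n-d)/2=k=\nu(G)$.

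The algorithm is therefore as follows. Compute $k$ in $\CL\P$ and restore the catalytic tape. The graph $H$ is logspace-computable from $G$ and $k$, so it can be supplied to later stages implicitly, recomputing each adjacency query on demand. Run the procedure of \Cref{pms} on $H$. This includes the call to \Cref{trank} producing a full-rank evaluation of the Tutte matrix of $H$, used in the sense of \Cref{rem:mm}. Output only the edges of the resulting perfect matching that lie in $E$.

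We must check that composition stays in $\CL\P$. Since $\CL\P=\CL\cap\P$ is closed under logspace reductions and sequential composition, the only work is confirming each stage leaves the catalytic tape intact before the next begins. The computation of $k$ finishes and restores the tape before $H$ is used. \Cref{pms} handles its own compress-or-compute bookkeeping. $H$ has $n+d\le 2n$ vertices and $O(n^2)$ edges, so the catalytic and time bounds remain polynomial in $n$. The output filtering is done on the fly as edges are written, so no extra workspace is needed.

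The main obstacle is not the reduction itself but running \Cref{pms} on an implicitly given graph. The procedure must feed the Tutte matrix of $H$, its evaluation $T'$ from \Cref{trank}, and the weighted matrices $T'\circ W$ to the $\TC^1\subseteq\CL$ Pfaffian and determinant subroutines. This is delicate in the case of \Cref{rem:mm} where $T'$ itself lives in freed catalytic space that must be restored afterwards. I would handle it by nesting: allocate disjoint catalytic regions for the \Cref{trank} call and for the weight blocks of \Cref{pms}. The inner procedure treats $T'$ as an oracle answered by the outer procedure, and the outer procedure is unwound only after the inner one has restored its region. Since each layer uses $O(\log n)$ workspace and there are $O(1)$ layers, the total remains logspace.
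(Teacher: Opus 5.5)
Your proposal is correct and follows the paper's proof essentially verbatim. Both compute $k=\nu(G)$ by \Cref{thm:mm}, add $d=n-2k$ pairwise non-adjacent dummy vertices joined to all of $V$, find a perfect matching of the padded graph with \Cref{pms}, and discard the dummy edges. Your additional discussion of composing the catalytic stages (restoring the tape after computing $k$, and running the \Cref{trank} call and the weight blocks of \Cref{pms} in disjoint catalytic regions) makes explicit what the paper leaves implicit in \Cref{rem:mm}.
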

\begin{proof}
	Let \(G=(V,E)\) be a graph, and suppose we are given the size \(k=\nu(G)\) of a maximum matching in \(G\) by \Cref{thm:mm}. 
Let $d=n-2k$.

Construct a graph \(G'=(V',E')\) as follows:

Add \(d\) new dummy vertices
$
w_1,\dots,w_d 
$. Let
$
V' = V \cup \{w_1,\dots,w_d\}
$,
and
$
E' = E \cup \{(w_i,v) : i\in[d],\, v\in V\}.
$

Clearly, there perfect matching in $G'$.
Suppose \(M\) is a matching of size \(k\) in \(G\). Then exactly
$
|V|-2k = d
$ vertices of \(V\) are unmatched by \(M\). Match each dummy vertex \(w_i\) to a distinct unmatched vertex of \(V\). Together with \(M\), this gives a perfect matching of \(G'\).

Conversely, suppose \(G'\) has a perfect matching \(M'\). Since there are exactly \(d\) dummy vertices, and each dummy vertex can only contribute one matched edge, exactly \(d\) edges of \(M'\) are incident to dummy vertices (since no edge is between two dummy vertices). Removing all such edges leaves a maximum matching $M$ in \(G\).

Therefore, we can find a maximum matching of $G$ in $\CLP$ as follows: First construct $G'$ in logspace, and compute a perfect matching $M'$ of $G'$ using \Cref{pms}. Next, using the above stated logspace procedure, compute a maximum $M$ matching of $G$ by removing the edges of $M'$ incident on the dummy vertices.

\end{proof}







\section{Rank of Mixed Matrices}\label{sec:mixed}
In this section, we prove \Cref{thm:main2}. 
\subsection{Mixed matrices}\label{sec:mixedmatrix}
 
\begin{definition}\label{def:mix}
	Let $A$ be an $r\times c$ matrix where each entry of $A$ is either an indeterminate, or an element of $\bbQ$, where the set of variables is $X=\{x_{ij}:i\in[r],j\in[c],A_{ij}=x_{ij}\}$.
\end{definition}
Mixed matrices can also be written as $A=K+Q$ where $Q$ is a scalar matrix over $\bbQ$ and $K$ is an indeterminate matrix (where no two indeterminates in two coordinates are the same, in the same sense as \Cref{def:mix}).
For any $r\times c$ matrix $A$ whose rows are indexed by $L_r$ and columns by $L_c$, define $D(A)=\{i\in L_r\cup L_c: \rank (A\setminus \{i\})=\rank A\}$. Let $E=\{(i,j):A_{ij}\in X\}$, further index these tuples such that $E=\{i:i=(i',j'),A_{i'j'}\in X\}$.

\begin{theorem}[Geelen, \cite{Geelen1999}]\label{geelen2}
	Let $A$ be a mixed matrix over the variable set $X$ where $|X|=m$, and let $A'$ be an evaluation of $A$. Then either $\rank A'=\rank A$ and $D(A')=D(A)$, or one of the following two statements hold:
	\begin{itemize}
		\item $\exists i\in E, a\in [n^{10}]$ such that $\rank A'_{x_{i}\gets a}=\rank A'+1$.
		\item $\exists i\in E, a\in [n^{10}]$ such that $\rank A'_{x_{i}\gets a}=\rank A'\wedge D(A')\subsetneq D(A'_{x_{i}\gets a})$.
	\end{itemize}
\end{theorem}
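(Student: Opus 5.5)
The plan is to prove the statement by the matroid-theoretic route of Geelen \cite{Geelen1999}, in analogy with the skew-symmetric case of \Cref{geelen}. Write $A=K+Q$ and let $A'$ be an evaluation with $\rank A'<\rank A$, or with $\rank A'=\rank A$ but $D(A')\ne D(A)$. Then I would show that a single-entry change with a value from $[n^{10}]$ strictly increases $A'$ in the preorder $\preceq$: either the rank goes up, or the rank stays the same and $D$ strictly grows.

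\textbf{Step 1 (single-entry algebra).} The key identity is that for a fixed square submatrix $X$ (rows $R$, columns $C$), the minor $\det A'_{x_i\gets y}[R,C]$ is affine in $y$, because $x_i$ occupies a single entry. Consequently:
\begin{itemize}
\item the rank of $A'_{x_i\gets y}$ takes one value for all $y$ outside a set of size at most $1$ per relevant minor;
\item changing one entry changes the rank by at most $1$.
\end{itemize}
Together with the Schwartz--Zippel bound (\Cref{SZ}) over the set $[n^{10}]$, this shows that generic choices of a single variable realise the generic behaviour. Here $n=\max(r,c)$, and the number of minors is absorbed by noting that only the affine function attached to one witnessing minor matters.

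\textbf{Step 2 (rank not maximal).} Suppose $\rank A'<\rank A$. I would pick a maximal nonsingular square submatrix $[R,C]$ of the generic matrix $A$, of size $\rank A$, so $\det A[R,C]\ne0$ as a polynomial. Consider the variables in $[R,C]$ whose current values keep the minor zero. The aim is to find one variable $x_i$ and a rank-$(\rank A'+1)$ minor whose determinant, restricted to the line $x_i=y$, is a nonzero affine function. This is exactly Geelen's augmenting argument. I expect it to go through the bipartite/matroid-intersection structure of mixed matrices, via Murota's decomposition $\rank A=\max\{\rank Q[I,J]+\text{term-rank of }K[\bar I,\bar J]\}$.

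\textbf{Step 3 (rank maximal, $D$ not).} If $\rank A'=\rank A$ but $D(A')\subsetneq D(A)$, the deficiency set $D(A')$ corresponds to the vertices unmatched in some maximum "independent matching" of the associated matroid-intersection instance. By Geelen's exchange argument, there is an edge variable whose reassignment swaps one row or column into the deficient set while keeping the rank. The new element $u$ is witnessed by a nonsingular $[R',C']$ avoiding $u$, whose determinant is affine and nonzero in $y$.

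\textbf{Main obstacle.} Steps 2 and 3 both require the exchange/augmentation structure of mixed matrices, and reconstructing this is the hard part. I would defer it to \cite{Geelen1999}, stating that his proof uses only single-entry updates with values from a set of size $\ge n^{O(1)}$, so $[n^{10}]$ suffices. The remaining work is to verify the equality $D(A')=D(A)$ in the terminal case, which follows because a generic evaluation also satisfies the terminal condition and the preorder has no strictly larger element.
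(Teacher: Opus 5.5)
\textbf{The gap is in Step 2.} You split the argument into two implications:
\begin{itemize}
\item ``$\rank A'<\rank A$ $\Rightarrow$ some single-entry change raises the rank'';
\item ``$\rank A'=\rank A$, $D(A')\subsetneq D(A)$ $\Rightarrow$ some single-entry change keeps the rank and enlarges $D$''.
\end{itemize}
The first implication is false, so no argument, Geelen's or any other, can deliver Step 2.

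Take the mixed matrix $A=\begin{pmatrix}x_1&x_2\\ x_3&0\end{pmatrix}$, which has generic rank $2$. Evaluate at $x_1=1$, $x_2=x_3=0$, giving $A'=\begin{pmatrix}1&0\\0&0\end{pmatrix}$ of rank $1$. The three single-entry changes give $\begin{pmatrix}a&0\\0&0\end{pmatrix}$, $\begin{pmatrix}1&a\\0&0\end{pmatrix}$ and $\begin{pmatrix}1&0\\a&0\end{pmatrix}$, all of rank at most $1$. The only $2\times 2$ minor is $-x_2x_3$, which vanishes to second order at $A'$. So its restriction to every coordinate line through $A'$ is identically zero, and the affine function you are looking for does not exist.

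Only the second bullet applies here. Setting $x_2\gets a\neq 0$ keeps rank $1$ and enlarges $D$ from $\{\text{row }2,\text{column }2\}$ to $\{\text{row }2,\text{column }1,\text{column }2\}$. After that, $x_3\gets b\neq 0$ raises the rank to $2$. This is the matrix analogue of an augmenting path of length $3$. An augmentation may need several single-entry moves, and the intermediate ones are visible only through growth of $D$. That is why the potential is the pair $(\rank, D)$, exactly as in \Cref{geelen}.

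What has to be proved is a single monotonicity statement: whenever $(\rank A',D(A'))\neq(\rank A,D(A))$, some single-entry change moves $A'$ strictly up in the preorder. That move may be $D$-enlarging even when $\rank A'<\rank A$. Your Step 2/Step 3 case split should be replaced by this statement. It, and not a rank-augmentation lemma, is what the deferred exchange argument must supply.

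Your closing remark on the terminal case is also not an argument. That some generic evaluation is maximal in the preorder says nothing about an arbitrary maximal $A'$. ``Every local maximum of the preorder is a global one'' is precisely the content of the theorem; it follows only as the contrapositive of the monotonicity statement above.

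For comparison, the paper gives no proof of this statement and simply imports it from \cite{Geelen1999}. Deferring the combinatorial core to Geelen is therefore in line with the paper. However, the scaffolding you build around the citation asserts a false intermediate claim and must be corrected as described.
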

Analogously to \Cref{2A,2B} ($2A,2B$), we show the following uniqueness lemmata, which will be crucial for our $\CL$ algorithm to work.
\begin{lemma}[$2A'$]\label{2A'}
	Let $A'$ be an evaluation of $A$ and $i\in E, a\in [n^{10}]$ such that $\rank A'_{x_i\gets a}=\rank A'+1$, and in $A'$, $x_i$ is $a_i$. Then $\forall \tilde a\ne a_i$, we have $\rank A'_{x_i\gets \tilde a}=\rank A'+1$.
\end{lemma}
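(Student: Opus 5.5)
The plan is to mirror the proof of \Cref{2A}, replacing Pfaffians of principal submatrices with determinants of square (not necessarily principal) submatrices. The key observation is that in a mixed matrix each indeterminate $x_i$ occupies exactly one entry $(i',j')$ of $A$. Hence, for any square submatrix, its determinant, viewed as a function of the value substituted for $x_i$ with all other entries fixed, is an affine function of that value.

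Set $k=\rank A'$. Since $\rank A'_{x_i\gets a}=k+1$, there are row and column sets $R,C$ with $|R|=|C|=k+1$ such that $A'_{x_i\gets a}[R,C]$ is non-singular. Define $L(y)=\Det(A'_{x_i\gets y}[R,C])$.

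\begin{enumerate}
\item If $(i',j')\notin R\times C$, then $L$ is constant. Then $L(a_i)=L(a)\ne 0$, which would make $A'[R,C]$ a non-singular $(k+1)\times(k+1)$ submatrix of $A'$. This contradicts $\rank A'=k$. So $(i',j')\in R\times C$.
\item By cofactor expansion along row $i'$, $L(y)$ is affine in $y$, with the form $\alpha y+\beta$.
\item We have $L(a_i)=0$, because $A'_{x_i\gets a_i}=A'$ has rank $k<k+1$. We also have $L(a)\ne 0$.
\item A nonzero-somewhere affine function with a root has exactly one root. So $L(\tilde a)\ne 0$ for every $\tilde a\ne a_i$, and therefore $\rank A'_{x_i\gets\tilde a}\ge k+1$.
\end{enumerate}

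For the upper bound, changing a single entry of a matrix is a rank-one perturbation, so the rank changes by at most $1$. Hence $\rank A'_{x_i\gets\tilde a}\le k+1$. (In the Tutte setting two symmetric entries change, which explains the $+2$ there.) Combining the two bounds gives $\rank A'_{x_i\gets\tilde a}=k+1$ for all $\tilde a\ne a_i$.

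There is no real obstacle here. The only point needing care is justifying that $L$ is affine, which holds because each variable appears in exactly one position of a mixed matrix, as guaranteed by \Cref{def:mix}. The argument is an immediate adaptation of \Cref{2A}.
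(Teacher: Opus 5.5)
Your proof is correct and follows the paper's argument: take a $(k+1)\times(k+1)$ minor that is non-singular in $A'_{x_i\gets a}$, note that its determinant is affine in the value of $x_i$ with unique root $a_i$, and use that changing a single entry changes the rank by at most one. Your write-up is slightly more careful than the paper's, which gives the minor size as $k$ instead of $k+1$ and does not note that the entry of $x_i$ must lie in the chosen minor, but the route is the same.
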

\begin{proof}
	Let $k=\rank A'$. 
	Let $X\subseteq L_r,Y\subseteq L_c, |X|=|Y|=k$ be a minor of $A$ that is non-singular in $A'_{x_i\gets a}$. Notice that $L(y)=\Det (A'_{x_i\gets y}[X,Y])$ is linear in $y$. Since $L(a)\ne 0, L(a_i)=0$, clearly $\forall \tilde a\ne a_i, L(\tilde a)\ne 0$ i.e $A'_{x_i\gets \tilde a}[X,Y]$ is non-singular.
	Furthermore, on changing one entries of $A'$ (i.e. by changing the variable $x_i$) the rank can go up by atmost $1$. Therefore, $\rank A'_{x_i\gets \tilde a}=\rank A'+1$ holds $\forall \tilde a\ne a$.
\end{proof}

\begin{lemma}[$2B'$]\label{2B'}
	Let $A'$ be an evaluation of $A$ and $i\in E, a\in [n^{10}]$ such that $\rank A'_{x_i\gets a}=\rank A'=k$ where in $A'$, $x_i$ is set to $a_i$. Suppose $\exists u\in D(A'_{x_i\gets a})\setminus D(A')$. Then $\forall \tilde a\ne a_i$, we have $u\in D(A'_{x_i\gets \tilde a})$. 
\end{lemma}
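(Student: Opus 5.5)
We follow the proof of \Cref{2B} line by line, with Pfaffians of principal submatrices replaced by determinants of square (not necessarily principal) minors. Here $D(A')$ consists of row and column indices whose deletion preserves the rank.

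First we fix a witness. Since $u\in D(A'_{x_i\gets a})$ and $\rank A'_{x_i\gets a}=k$, deleting $u$ from $A'_{x_i\gets a}$ leaves rank $k$. So there are $X\subseteq L_r$, $Y\subseteq L_c$ with $|X|=|Y|=k$, $u\notin X\cup Y$, and $A'_{x_i\gets a}[X,Y]$ non-singular. Since $u\notin D(A')$, deleting $u$ from $A'$ drops the rank below $k$, so every $k\times k$ minor of $A'$ avoiding $u$ is singular; in particular $A'[X,Y]$ is singular. Set $L(y)=\Det(A'_{x_i\gets y}[X,Y])$. Each indeterminate occupies a single entry, so $L$ is affine in $y$ (it is constant if the entry of $x_i$ lies outside $[X,Y]$). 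We have $L(a)\neq 0$ and $L(a_i)=0$. Hence $L$ is non-constant with unique root $a_i$, and $L(\tilde a)\neq 0$ for every $\tilde a\neq a_i$. So $A'_{x_i\gets\tilde a}[X,Y]$ is a non-singular $k\times k$ minor avoiding $u$.

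Next we bound the rank. We claim $\rank A'_{x_i\gets\tilde a}\le k$ for all $\tilde a$. Suppose instead that some $(k+1)\times(k+1)$ minor $[X',Y']$ is non-singular at $\tilde a$. Then $\wtL(y)=\Det(A'_{x_i\gets y}[X',Y'])$ is affine in $y$ and vanishes at both $a$ and $a_i$, because both $A'$ and $A'_{x_i\gets a}$ have rank $k$. We have $a\neq a_i$: otherwise $A'_{x_i\gets a}=A'$, contradicting $u\in D(A'_{x_i\gets a})\setminus D(A')$. An affine function with two distinct roots is identically zero, so $\wtL(\tilde a)=0$, a contradiction. The same argument also follows directly from the observation that changing one entry changes the rank by at most one.

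Combining the two steps, $\rank A'_{x_i\gets\tilde a}=k$, and $A'_{x_i\gets\tilde a}$ with $u$ deleted still contains the non-singular $k\times k$ minor $[X,Y]$. So deleting $u$ keeps the rank at $k$, which means $u\in D(A'_{x_i\gets\tilde a})$. The only delicate point is the setup in the first step: choosing the witness minor to avoid $u$ on the correct side, whether $u$ indexes a row or a column. Once that is done, the rest is the affine-in-$y$ argument.
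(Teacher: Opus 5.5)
Your proof is correct and is essentially the paper's argument. It uses a $k\times k$ witness minor $[X,Y]$ avoiding $u$ whose determinant is affine in $y$ with unique root $a_i$, together with the claim that no $(k+1)\times(k+1)$ minor can become non-singular, because its affine determinant $\wtL$ vanishes at the two distinct points $a$ and $a_i$.

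One aside is wrong: a single-entry change moving the rank by at most one only gives $\rank A'_{x_i\gets\tilde a}\le k+1$, not $\le k$, so your two-root argument is what carries the bound.
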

\begin{proof}
	From the given assumptions, we have a $X_u\subseteq L_r,Y_u\subseteq L_c ,|X_u|=|Y_u|=k$ such that $A'_{x_i\gets a}[X_u,Y_u]$ is non-singular and $u\notin X_u\cup Y_u$. Since $u\notin D(A')$, $A'[X_u,Y_u]$ is singular.
	\begin{claim}
		$\forall \tilde a\ne a_i$, $\rank A'_{x_i\gets \tilde a}\le k$.
	\end{claim}
	\begin{proof}
		Suppose $\rank A'_{x_i\gets \tilde a}>k$ for some $\tilde a$. Therefore, $\tilde a\notin\{ a,a_i\}$. Therefore, we can find $X,Y$ as a square minor of size larger than $k$ such that $A'_{x_i\gets \tilde a}[X,Y]$ is non-singular. But then $\wtL (y)=\Det(A'_{x_i\gets y}[X,Y])$ is linear and $\wtL(a)=\wtL(a_i)=0$. Therefore, $\wtL(\tilde a)=0$ gives a contradiction.
	\end{proof}
	Again $L(y)=\Det (A'_{x_i\gets y}[X_u,Y_u])$ is linear in $y$, such that $L(a)\ne 0, L(a_i)=0$.
	Therefore, $\forall \tilde a\ne a_i$ we have $L(\tilde a)\ne 0$ i.e $A'_{x_i\gets \tilde a}[X_u]$ is non-singular and by the above claim $X_u,Y_u$ is of maximum possible size (i.e. $\rank A'_{x_i\gets \tilde a}=|X_u|=|Y_u|=k$), thus $u\in D(A'_{x_i\gets \tilde a})$.
\end{proof}
\begin{lemma}\label{rankc}
Suppose that we are given an evaluation of a mixed matrix $A$, namely $A'$. Then we can check, in $\CL$ if $\rank A=\rank A'$ or not.
\end{lemma}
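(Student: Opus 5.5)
The plan is to mirror the proof of the corresponding lemma for Tutte matrices (the one certified by \Cref{alg1}), with \Cref{geelen2} in place of \Cref{geelen}. Given the evaluation $A'$, we first compute $k=\rank A'$ in $\CL$ using $\TC^1\subseteq\CL$ (\Cref{rmk:remark1}). Rank over $\bbQ$ of a rational matrix with polynomially bounded bit entries is in $\TC^1$, so this step is fine.

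Next we compute membership in $D(A')$. For each index $u\in L_r\cup L_c$ we delete row or column $u$ and compare $\rank(A'\setminus\{u\})$ with $k$. This is again a single rank computation. Since $D(\cdot)$ has at most $r+c$ elements, we never store the set. Instead, whenever we need to test $D(A')\subsetneq D(A'')$, we loop over $u$ with an $\mcO(\log n)$ counter and recompute both memberships each time.

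The algorithm itself is the analogue of \Cref{alg1}. We loop over all $i\in E$ and $a\in[n^{10}]$, which are polynomially many pairs indexed in logspace. For each pair we form $A''=A'_{x_i\gets a}$ implicitly, meaning any entry of $A''$ is computable in logspace from the input, $i$ and $a$. We then test whether $\rank A''>k$, or whether $\rank A''=k$ and $D(A')\subsetneq D(A'')$. The second test checks that every $u\in D(A')$ lies in $D(A'')$ and that some $u\in D(A'')$ lies outside $D(A')$. We output ``$\rank A=\rank A'$'' if and only if no pair passes the test.

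Correctness in one direction is \Cref{geelen2}: if $\rank A'<\rank A$, or if the ranks are equal but $D(A')\ne D(A)$, then some pair passes the test. Conversely, suppose $\rank A'=\rank A$ and no such pair exists. Then by the trichotomy of \Cref{geelen2} we are in the first alternative, so the answer is correct. We also need that no pair passes the test when $\rank A'=\rank A$. A rank increase is then impossible, since every evaluation has rank at most $\rank A$. So we take the statement of \Cref{geelen2} as already giving the dichotomy in the form used for \Cref{alg1}, namely that $A'$ attains the maximum rank exactly when no improving pair exists.

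The main obstacle is composing $\CL$ subroutines. Each rank call uses the catalytic tape and restores it, and the outer loop makes polynomially many such calls, with nested calls inside the $D$-comparisons. Since each call is a $\CL$ computation that fully restores the catalytic block $\mcB$ before returning, sequential composition with $\mcO(\log n)$ bits of loop state is legitimate. This is the same argument that underlies \Cref{rmk:remark1}. The conjunctions and disjunctions of polynomially many $\CL$ predicates are evaluated by logspace loops over these restored oracle calls, which completes the argument.
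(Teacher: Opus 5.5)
Your route is the paper's: the analogue of \Cref{alg1} driven by \Cref{geelen2}. Your $\CL$ bookkeeping is fine, namely implicit perturbed matrices, recomputing $D$-membership on the fly, and sequential composition of rank calls that restore the catalytic tape. The gap is the step ``we also need that no pair passes the test when $\rank A'=\rank A$'', which you settle by taking \Cref{geelen2} ``as already giving the dichotomy''. That dichotomy is false, and your own preceding sentence shows why. \Cref{geelen2} is only a disjunction. When $\rank A'=\rank A$ but $D(A')\neq D(A)$, it forces a $D$-increasing pair, so your procedure answers ``$\rank A\neq\rank A'$'' on an instance where the ranks agree. Concretely, take
\[
A=\begin{pmatrix} x_1 & 1 & 1\\ 1 & 1 & 0\end{pmatrix},\qquad A'=A|_{x_1\gets 1},
\]
with columns $c_1,c_2,c_3$. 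Every evaluation has rank $2$, since the minor on columns $\{c_1,c_3\}$ has determinant $-1$; hence $\rank A'=\rank A=2$. But $D(A')=\{c_1,c_2\}$, because the minor on $\{c_1,c_2\}$ vanishes at $x_1=1$. Meanwhile $A'_{x_1\gets 2}$ has rank $2$ and $D=\{c_1,c_2,c_3\}$, so the pair $(x_1,2)$ passes your test.

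What the procedure actually decides is the conjunction ``$\rank A'=\rank A$ and $D(A')=D(A)$''. Proving even that needs an observation you did not make. If $\rank A''=\rank A$, then $D(A'')\subseteq D(A)$: a nonsingular $k\times k$ minor of the evaluation avoiding $u$ is a nonzero polynomial minor of $A$ avoiding $u$. Hence, starting from $D(A')=D(A)$, no $D$-increase is possible, and no rank increase is possible either. The paper's one-line proof has exactly the same defect, in its sentence ``otherwise we know that $\rank A'<\rank A$''. The conjunction is all that \Cref{thm:rank} uses: a ``yes'' certifies an assignment of maximum rank, and a ``no'' still guarantees an improving pair to compress. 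If you want the lemma literally as stated, compute $\rank A$ via \Cref{thm:rank}, run with the conjunction test so there is no circularity, and compare it with $\rank A'$.
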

\begin{proof}
	The proof is very similar to \Cref{alg1} and uses \Cref{geelen2}. For all $i$ and all perturbations $x_i\gets a (\forall a\in[m^{10}])$ check if the rank increases by $1$ or if the $D$ set increases. If for no such perturbation this happens, we conclude by \Cref{geelen2} that $\rank A'=\rank A$, and otherwise  we know that $\rank A'<\rank A$.
\end{proof}

\subsection{A catalytic rank algorithm}

In this section, we show that the rank of mixed matrices can be computed in $\CL$. Let $A$ be a mixed matrix with variables $x_1,\ldots,x_m$. The algorithm is essentially the same as the matching algorithm described in \Cref{sec:matching}. We sample an assignment $a_1,\ldots,a_m$ for these variables from the catalytic tape. Either this assignment maximizes the rank of the evaluated matrix $A'$, or it does not. We can verify this using \Cref{rankc}. If the rank is indeed maximized, we proceed to restore the modified catalytic tape.

Otherwise, if the rank is not maximized, we consider two cases:
\begin{itemize}
    \item[(a)] Suppose that by perturbing one of the variables $x_i$, the rank of $A'$ can be increased. In this case, by \Cref{2A'}, for all values other than $a_i$, the rank increases. Therefore, we discard $a_i$ from the catalytic tape and store the index $i$ together with the rank. Later, we can recover $a_i$ as the unique value for which the rank remains unchanged.
    
    \item[(b)] Otherwise, perturbing some variable $x_i$ does not increase the rank, but enlarges the set $D(A')$, i.e., there exists a vertex $u$ that gets added to $D(A')$. By \Cref{2B'}, $a_i$ is the unique value for which $u \notin D(A')$. Thus, we again discard $a_i$ and store its index $i$ together with $u$. As before, $a_i$ can be recovered when needed using this characterization.
\end{itemize}

Finally, if the sampled assignment already yields the maximum rank, we are done. Otherwise, we have freed enough space to recompute the rank from scratch using \cite{Geelen1999}. A formal proof follows.

\begin{theorem}\label{thm:rank}
	Let $A$ be an $r\times c$ dimensional mixed matrix on variables $X$ indexed over the set $E$ of size $m\le n^2$, and $n=r+c$. Then, there is a $\CL\P$ procedure that finds $\rank A$, and an assignment to $X$ (and thus an evaluation of $A$, namely $A'$) that is of the maximum possible rank, i.e., $\rank A'=\rank A$.
\end{theorem}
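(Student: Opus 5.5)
We mirror the proof of \Cref{thm:mm}, replacing the Tutte matrix by the mixed matrix $A=K+Q$. We replace \Cref{geelen} by \Cref{geelen2}, the uniqueness \Cref{2A,2B} by \Cref{2A',2B'}, and the maximality test \Cref{alg1} by \Cref{rankc}. The catalytic tape is split into blocks $(\mcC_1,\ldots,\mcC_N,\mcB)$ with $N=n^3$. Each $\mcC_r=(a_1,\ldots,a_m)$ has $a_i\in[n^{10}]$ written in $10\log n$ bits, and $\mcB$ is reserved for the $\TC^1\subseteq\CL$ rank computations of \cite{BuhrmanCL2014}. Since $m\le n^2$, the total catalytic space $\mcO(Nm\log n+|\mcB|)$ is polynomial.

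We process the blocks in order; let $A'$ be the evaluation of $A$ given by $\mcC_r$.
\begin{enumerate}
\item Using \Cref{rankc}, check whether $\rank A'=\rank A$. If so, output $\rank A'$ and the assignment $\mcC_r$ (copied to the output tape), then restore $\mcC_1,\ldots,\mcC_{r-1}$.
\item Otherwise \Cref{geelen2} guarantees a witness $(i,a)$ of one of two kinds.
\begin{itemize}
\item If the rank rises by $1$ (case $2A'$), overwrite $\mcC_r$ by $(2A',i,\rank A',a_1,\ldots,a_{i-1},a_{i+1},\ldots,a_m)$.
\item If the rank is unchanged but $D(A')\subsetneq D(A'_{x_i\gets a})$ (case $2B'$), pick some $u\in D(A'_{x_i\gets a})\setminus D(A')$, where $u\in L_r\cup L_c$ takes $\log n$ bits. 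Overwrite $\mcC_r$ by $(2B',i,u,\ldots)$, again dropping $a_i$.
\end{itemize}
In both cases we store $i$ in $\log m\le 2\log n$ bits and the rank or $u$ in $\log n$ bits, while dropping $10\log n$ bits. This frees at least $6\log n$ bits per block.
\item If all $N$ blocks are compressed, we have $\Omega(n^3\log n)$ free bits. After shifting them into a contiguous region, we run Geelen's deterministic polynomial-time algorithm \cite{Geelen1999} there. That algorithm greedily applies \Cref{geelen2} starting from any evaluation and needs only polynomial space, which we take to be $\mcO(n^3\log n)$. It outputs an evaluation $A'$ of maximum rank together with $\rank A$.
\end{enumerate}

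Restoration mirrors \Cref{alg3}. For a block tagged $2A'$ with stored $j$ and $k$, the original $a_j$ is the unique $a\in[n^{10}]$ with $\rank A'_{x_j\gets a}=k$, by \Cref{2A'}. For a block tagged $2B'$ with stored $u$, it is the unique $a$ with $u\notin D(A'_{x_j\gets a})$, by \Cref{2B'}. Each test is a rank computation, so it runs in $\CL$ using $\mcB$, which is itself restored by the $\TC^1$ simulation. The whole procedure takes polynomially many rank calls, each running in polynomial time, so by $\CLP=\CL\cap\P$ \cite{CookLiMertzPyne2025} we obtain a $\CLP$ procedure.

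\textbf{Where care is needed.} I expect two points to need real argument, not the overall structure.
\begin{itemize}
\item Every value tried in Geelen's greedy step, and every value in \Cref{geelen2}, must lie in $[n^{10}]$. This is what makes the restoration search range $[n^{10}]$ cover the original $a_j$. The original $a_j$ comes from the tape and so lies in $[n^{10}]$, and by the uniqueness lemmas it is the only value in that range satisfying the test.
\item The output must be consistent. In the compressed branch we must output the assignment found in free space before restoring anything. In the lucky branch the maximum-rank assignment is the original content of $\mcC_r$, which is never modified, so writing it to the output tape before restoration is safe.
\end{itemize}
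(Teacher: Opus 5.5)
Your proposal is correct and follows the paper's proof essentially step for step: the same $n^3$-block layout, the same $2A'/2B'$ compression justified by \Cref{2A',2B'}, the same fallback to Geelen's algorithm in the freed space, and the same uniqueness-based restoration. One small correction to your first caveat: the restoration search over $[n^{10}]$ covers the original $a_j$ simply because $a_j$ was read from the tape, whereas the $[n^{10}]$ range in \Cref{geelen2} is what guarantees that the witness search in Step 2 succeeds.
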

\begin{proof}
	The algorithm mirrors the proof of \Cref{thm:mm}.
	We assume that the catalytic tape is divided into blocks $(\mcC_1,\ldots,\mcC_N,\mcB)$ where $\mcC_t=(a_1,\ldots,a_m)\forall t\in [N]$ with each $a_i\in[n^{10}]$ is $10\log n$ bits long, and $\mcB$ is for rank computations; $N=n^3$. The algorithm proceeds as follows: We iterate over all $\mcC_1,\ldots,\mcC_N$ as follows, suppose we are processing $\mcC_t$:
	\begin{enumerate}
		\item Let $A'$ be the evaluation of $A$ given by $\mcC_t$ i.e. by putting $x_i\gets a_i\forall i\in[m]$. We first check if $\rank A'=\rank A$ or not, using \Cref{rankc}. If yes, then we have found a required assignment, and we proceed to restore all previous catalytic blocks, i.e. jump to Step $4$ with the temporary variable $index\gets t-1$. Now suppose we did not get the required assignment. Therefore, from \Cref{geelen2}, the only two possibilities are:
		\begin{enumerate}
			\item[$(a)$] $\exists i\in E, a\in [n^{10}]$ such that $\rank A'_{x_i\gets a}=\rank A'+1$.
			\item[$(b)$] $\exists i\in E, a\in [n^{10}]$ such that $\rank A'_{x_i\gets a}=\rank A'\wedge D(A')\subsetneq D(A'_{x_i\gets a})$.
		\end{enumerate}Consider the two cases $2A',2B'$ that deals with $(a)$ and $(b)$ respectively, as follows--
		\item[$2A'$.] In this case, after finding an $i$ (we enumerate over all $i$ to find such an $i$) as in $(a)$, from \Cref{2A'}, we know that for all substitutions $x_i\gets a$ and $a\ne a_i$, we have $\rank A'_{x_i\gets a}\ne\rank A'$. Therefore, we replace $\mcC_t$ with $(2A',i,\rank A',a_1,\ldots,a_{i-1},a_{i+1},\ldots,a_m)$. This saves $ 6\log n$.
		\item[$2B'$.] Again, first we find an index $i$, and value $a$ witnessing case $(b)$. Now, we pick a $u\in D(A'_{x_i\gets a})\setminus D(A')$. Again from \Cref{2B'} we know that $a_i$ is the unique value for which $D(A')$ does not contain $u$. Hence we replace $\mcC_r$ with $(2B',i,u,a_1,\ldots,a_{i-1},a,a_{i+1},\ldots,a_m)$. This again saves $6\log n$ bits.
		\item[$3.$] Suppose we just processed $\mcC_N$ and still did not find a required assignment. In this case, we have freed up at least $\mcO(n^3\log n)$ space for $N=n^3$. We shift the catalytic strings to make this space contiguous, and run Geelen's algorithm to find an assignment that maximizes the rank. Next, set $index=N$ and go to step $4$.
		\item[$4.$] In either case, we now recompute the modified catalytic strings $\mcC_1,\ldots,\mcC_{index}$. For each $t\in[index]$, we execute the following with $\mcC_t=(case,j,\beta,a_1,\ldots,a_{j-1},a_{j+1},\ldots,a_m)$:
		\begin{itemize}
			\item[$(a)$] This is when $case=2A'$. Therefore, we have $k=\beta$ to be the rank of $A'$ with the evaluation of $T$ given by the original catalytic tape. Thus, we find the unique $a$ such that $\rank A'_{x_j\gets a}=k$, and restore $a_j\gets a$. The correctness is again given by \Cref{2A'}.
			\item[$(b)$] This time $case=2B'$. Therefore $u=\beta$. Again find the unique $a$ such that $u\notin D(T'_{x_j\gets a})$ and restore $a_j\gets a$. The correctness follows from \Cref{2B'}.
		\end{itemize}
		Finally we reset  $\mcC_t\gets (a_1,\ldots,a_{j-1},a_j,a_{j+1},\ldots,a_m)$.
	\end{enumerate}
	The above algorithm clearly runs in polynomial time and the catalytic space used is $\mcO(mn^2\log n+|\mcB|)=\poly(n)$ with work space $\mcO(\log n)$, and the catalytic tape is always restored when the algorithm halts. Thus the above gives the required $\CL\P$ algorithm that finds $\rank A$ and an assignment $\vec a$ such that $\rank A=\rank A_{x_i\gets a_i}$.
\end{proof}

\begin{corollary}
	Given two linear matroids $M_1$ and $M_2$, we can find the cardinality of the maximum common independent set in $\CL\P$.
\end{corollary}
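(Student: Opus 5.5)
The plan is to reduce linear matroid intersection to computing the rank of a mixed matrix, and then apply \Cref{thm:rank}. We use the classical (Edmonds/Tomizawa--Iri/Murota) formulation.

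Let $M_1$ and $M_2$ be represented over the common ground set $E=[m]$ by matrices $A_1\in\bbQ^{r_1\times m}$ and $A_2\in\bbQ^{r_2\times m}$. Form the mixed matrix
\[
A=\begin{pmatrix} 0 & A_1 \\ A_2^{T} & D \end{pmatrix},\qquad D=\mathrm{diag}(x_1,\ldots,x_m),
\]
which is $(r_1+m)\times(r_2+m)$ with $m$ distinct indeterminates on the diagonal block. This is a mixed matrix in the sense of \Cref{def:mix}. Equivalently, one can use $A_1 D A_2^{T}$ through the identity $\rank A=m+\rank(A_1DA_2^T)$, where the second rank is taken over the field of rational functions.

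The key identity to establish is
\[
\rank A = m + \max\{|I| : I\in\mcI_1\cap\mcI_2\}.
\]
We will prove it through the Cauchy--Binet expansion of $A_1 D A_2^T$. Since
\[
\det\bigl(A_1[R,\cdot]\,D\,A_2^T[\cdot,C]\bigr)=\sum_{|S|=k}\det A_1[R,S]\,\det A_2[C,S]\prod_{j\in S}x_j,
\]
the distinct monomials $\prod_{j\in S}x_j$ cannot cancel. Hence a $k\times k$ minor is nonzero iff some $k$-set $S$ is independent in both matroids, and therefore $\rank(A_1DA_2^T)$ equals the maximum common independent set size. The relation $\rank A=m+\rank(A_1DA_2^T)$ follows from a Schur-complement step: $D$ is invertible over $\bbQ(x)$, and row and column operations eliminate the blocks to give $\begin{pmatrix} -A_1D^{-1}A_2^T & 0\\ 0 & D\end{pmatrix}$. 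Scaling the variables by $x_j\mapsto 1/x_j$ does not affect the generic rank, since the monomial argument still applies.

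The algorithm is then short. Build $A$ in logspace from the input representations. Run the $\CL\P$ procedure of \Cref{thm:rank} to get $\rank A$, and output $\rank A-m$. The size parameter $n=r_1+r_2+2m$ is polynomial, and the number of variables $m\le n^2$, so the hypotheses of \Cref{thm:rank} hold. If the matroids are given over a finite field instead of $\bbQ$, the same argument should go through as long as the evaluation range $[n^{10}]$ fits inside the field, possibly after a field extension. We expect to remark on this rather than handle it in full.

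The main obstacle is justifying the rank identity carefully, in particular the claim that distinct monomials give no cancellation. This comes down to the standard Cauchy--Binet argument, so we will cite it as in \cite{Murota1995Mixed} rather than reprove it from scratch.
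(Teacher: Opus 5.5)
Your proposal is correct and matches the paper's proof. The paper builds the same mixed matrix $A=\begin{pmatrix} 0 & A_1\\ A_2^{T} & \mathrm{diag}(x_1,\ldots,x_m)\end{pmatrix}$, cites Murota for $\rank A = m+\max\{|I| : I\in\mathcal{I}_1\cap\mathcal{I}_2\}$, and applies \Cref{thm:rank}; your Schur-complement and Cauchy--Binet sketch (through $A_1D^{-1}A_2^{T}$ and the substitution $x_j\mapsto 1/x_j$) soundly justifies the identity the paper only cites.
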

\begin{proof}
	Let $M_1$ and $M_2$ be matroids on a common ground set $E$, $|E|=n$. Suppose $M_1$ and $M_2$ have their representation matrices $A_1$ and $A_2$ repectively. Let $\{x_1\ldots,x_n\}$ be the set of variables. Consider the following matrix:
	
	$$
	A:=\left(\begin{array}{c|ccc}
		0 & & A_1 & \\
		\hline & x_1 & & \\
		A_2^T & & \ddots & \\
		& & & x_n
	\end{array}\right) .
	$$

	In \cite{Murota1995Mixed,murota2000matrices}, Murota showed that the size of a maximum common independent set of $M_1$ and $M_2$ is $\rank A-n$. Since $A$ is a mixed matrix, and we can compute $\rank A$ using \Cref{thm:rank}, and we are done.
\end{proof}

\section{$(1-\eps)$-approximation for Edmonds's Problem}
In this section we prove \Cref{thm:main3}.

Given a matrix space $\mcA=\langle A_1,\ldots , A_m\rangle $ where each matrix is a $n\times n$ matrix over some underlying field $\bbF$ of size greater that some polynomial in $n$, say $|\bbF|>n^{c}$ for a carefully chosen $c$ that depends inversely on $\eps$. We want to approximate the maximum rank that any matrix in $\mcA$ can attain. Consider the indeterminates $x_1,\ldots,x_m$, we want to approximate the rank of $\sum_{i\in[m]}A_ix_i=:\rank(\mcA)$.

\begin{theorem}[\cite{BlaserJindalPandey2018}]\label{thm:ed}
	Let $\mcA=\langle A_1,\ldots,A_m\rangle\le \bbF^{n\times n}$, and $\eps\in(0,1)$ be a fixed parameter with $\ell=\ceil{\frac 1 \eps-1}$. Let $a_i\in\bbF$ for $i\in[m]$ and $S$ be any subset of $\bbF$ of size $n^{c}$. Then either:
	\begin{enumerate}
		\item $\rank \Big(\sum_{i\in[m]}A_ia_i\Big)\ge (1-\eps)\cdot\rank \Big(\sum_{i\in[m]}A_ix_i\Big)$.
		\item or, $\exists I=\{i_1,\ldots,i_\ell\}\in\binom{[m]}\ell$ and $\lambda_{i_1}\ldots,\lambda_{i_\ell}\in S$ such that $$\rank \Big(\sum_{i\in[m]\setminus I}A_ia_i+\sum_{i\in I}A_i\lambda_i\Big)>\rank \Big(\sum_{i\in[m]}A_ia_i\Big)$$
	\end{enumerate}
\end{theorem}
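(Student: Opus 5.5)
We plan to follow the strategy of \cite{BlaserJindalPandey2018}: a local-search argument built on a Schur-complement expansion, followed by a restriction to few variables. Let $B=\sum_{i\in[m]}A_ia_i$, let $r=\rank B$ and let $R=\rank(\sum_i A_ix_i)$. Assume item 1 fails, so $r<(1-\eps)R$. Since $\ell\ge \frac1\eps-1$, this gives $R>r/(1-\eps)\ge r(1+\frac1\ell)$.

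\emph{Step 1: normalise.} Multiply on the left and right by invertible scalar matrices; this changes no ranks. Then $B=\begin{pmatrix} I_r&0\\0&0\end{pmatrix}$. Substitute $x_i=a_i+y_i$ and write
\[
\sum_i A_ix_i=\begin{pmatrix} I_r+L_{11}(y)& L_{12}(y)\\ L_{21}(y)& L_{22}(y)\end{pmatrix},
\]
where each $L_{pq}$ is linear and homogeneous in $y$. Over the power-series ring in $y$, $I_r+L_{11}$ is invertible. The rank therefore equals $r$ plus the rank of the Schur complement
\[
S(y)=L_{22}-L_{21}L_{12}+L_{21}L_{11}L_{12}-\cdots .
\]
The degree-$1$ homogeneous part of $S$ is $L_{22}$, and for $k\ge 2$ the degree-$k$ part is $\pm L_{21}L_{11}^{k-2}L_{12}$. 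Moreover, every $(r+1)$-minor that contains the top-left $I_r$ block equals $\det(I+L_{11})$ times an entry of $S$.

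\emph{Step 2 (main obstacle).} We must show that some entry of $S$ has a nonzero homogeneous component of degree at most $\ell$. Equivalently: if $L_{22}=0$ and $L_{21}L_{11}^kL_{12}=0$ for all $0\le k\le \ell-2$, then $R\le r(1+\frac1\ell)$, contradicting the bound above. I would first try a Wong-sequence argument, using the (generic) matrices $L_{11},L_{12},L_{21}$ as linear maps. Consider the chain of subspaces $U_j=\operatorname{span}$ of the images of $L_{11}^{k}L_{12}$ for $k<j$, taken inside $\bbF^r$ (or in the function-field extension). The vanishing conditions say that $L_{21}$ kills $U_{\ell-1}$ and $L_{22}=0$. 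A rank count then bounds how much the off-diagonal blocks can add: each extra unit of rank beyond $r$ costs about $\ell$ dimensions of the $r$-dimensional block, so $R-r\le r/\ell$. If this direct count is messy, I would instead argue degree-wise on a generic evaluation, using the nilpotent-like structure of $L_{11}$ restricted to the chain.

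\emph{Step 3: few variables and a choice from $S$.} By Step 2, some entry $s(y)$ of $S$ has a nonzero homogeneous part of degree $d\le\ell$. That part contains a monomial in at most $\ell$ variables; let $I$ be an $\ell$-set containing them. Set $y_i=0$ for $i\notin I$, that is, keep $a_i$ there. The lowest-degree part of the corresponding $(r+1)$-minor, viewed as a polynomial in $\{y_i\}_{i\in I}$, is this same nonzero degree-$d$ part, because $\det(I+L_{11})$ has constant term $1$. Hence the minor is a nonzero polynomial of degree at most $n$ in $\{x_i\}_{i\in I}$. By \Cref{SZ}, with $|S|=n^c>n$, there exist $\lambda_i\in S$ for $i\in I$ at which it does not vanish. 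This gives rank $>r$, which is item 2.
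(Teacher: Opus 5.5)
Your outline is correct, and in substance it is the Wong-sequence argument of \cite{BlaserJindalPandey2018}; the paper itself gives no proof and only cites that paper. The following parts are all fine: the normal form, the Schur-complement expansion with degree-$k$ part $\pm L_{21}L_{11}^{k-2}L_{12}$, the identity expressing $(r+1)$-minors through $I_r$ as $\det(I+L_{11})\cdot S_{pq}$, and the last step (a monomial of degree $\le\ell$ uses at most $\ell$ variables, and a nonzero polynomial of degree $\le n<|S|$ cannot vanish on all of $S^{|I|}$).

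\textbf{What is missing: the count in Step 2.} Step 2 is the real content of the theorem. You name the right chain but never carry out the rank count. It closes in a few lines, with no need for your fallback, provided the chain is built over $K=\bbF(y)$ from the single normalised generic matrix $M=\begin{pmatrix} I_r+L_{11}&L_{12}\\ L_{21}&L_{22}\end{pmatrix}$.

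Put $U_0=0$ and $U_j=\sum_{k<j}L_{11}^k\operatorname{im}L_{12}\subseteq K^r$. For $0\le j\le\ell-1$, the hypotheses $L_{22}=0$ and $L_{21}U_{\ell-1}=0$ give
\[
M\bigl(U_j\oplus K^{n-r}\bigr)\subseteq U_{j+1}\oplus 0 .
\]
Hence $\dim\ker M\ge \dim U_j+(n-r)-\dim U_{j+1}$, that is, $R\le r+\dim U_{j+1}-\dim U_j$. Summing over $j=0,\ldots,\ell-1$ gives $\ell(R-r)\le\dim U_\ell\le r$, which is exactly the bound you need. This is the shrunk-subspace estimate applied to the first $\ell$ terms of the Wong sequence of the pencil $\langle \mathrm{diag}(I_r,0),M\rangle$ over $K$.

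\textbf{The chain must not be formed inside $\bbF^r$.} Drop that alternative, i.e.\ spanning images over all specialisations of $y$, which amounts to the second Wong sequence of the $\bbF$-space $\mcA$. The identity $L_{21}(y)L_{11}(y)^kL_{12}(y)\equiv 0$ only kills symmetrised sums of products of blocks of the $A_i$, not the individual products. For example, with $L_{12}=(y_2,-y_1)^{T}$ and $L_{21}=(y_1,y_2)$ you have $L_{21}(y)L_{12}(y)\equiv 0$ but $L_{21}(y')L_{12}(y)\not\equiv 0$. So $L_{21}$ need not kill the $\bbF$-chain, and that chain is tied to the noncommutative rank, which can exceed $R$.

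\textbf{Two small points in Step 3.}
\begin{itemize}
\item You only need $s|_{y_i=0,\ i\notin I}\neq 0$, since a unit times a nonzero power series is nonzero. Your remark about the ``lowest-degree part'' requires $d$ to be chosen minimal, but it is not needed.
\item Choosing an $\ell$-set $I$ tacitly assumes $m\ge\ell$. For $m<\ell$ take $I=[m]$; the statement should really allow $|I|\le\ell$.
\end{itemize}
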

Henceforth, we fix $S$ as the first $n^c$ elements of $\bbF$ in the canonical order. Even though we do not have any uniqueness lemma like \Cref{2A,2B}, we show the following which is essentially an application of the Schwartz-Zippel lemma, and shall be crucial for our purposes.
\begin{lemma}\label{bad}
	Suppose $\vec a=a_1,\ldots,a_m\in\bbF$ and $\exists I=\{i_1,\ldots,i_\ell\}\in\binom{[m]}\ell$ and $\lambda_{i_1}\ldots,\lambda_{i_\ell}\in S$ such that $$\rank \Big(\sum_{i\in[m]\setminus I}A_ia_i+\sum_{i\in I}A_i\lambda_i\Big)>\rank \Big(\sum_{i\in[m]}A_ia_i\Big).$$ Let $$\mfI_{\vec a,I}=\Big\{\lambda'_{i_1},\ldots,\lambda'_{i_\ell}\in S:\rank \Big(\sum_{i\in[m]\setminus I}A_ia_i+\sum_{i\in I}A_i\lambda'_i\Big)=\rank \Big(\sum_{i\in[m]}A_ia_i\Big)\Big\}.$$ Then $|\mfI_{\vec a,I}|\le n^{c\ell-c+1}$.
\end{lemma}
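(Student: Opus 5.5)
We will prove the bound by writing a single nonzero low-degree polynomial in the $\ell$ perturbed variables, showing that every tuple in $\mfI_{\vec a,I}$ is a root of it, and then counting roots over $S^\ell$ with a Schwartz--Zippel-type argument (\Cref{SZ}).

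\textbf{Setup.} Let $k=\rank\big(\sum_{i\in[m]}A_ia_i\big)$, and fix the witnessing tuple $(\lambda_{i_1},\ldots,\lambda_{i_\ell})\in S^\ell$ from the hypothesis. Since the perturbed matrix has rank $>k$, it has a nonsingular $(k+1)\times(k+1)$ submatrix, indexed by rows $R$ and columns $C$ with $|R|=|C|=k+1$. Introduce fresh indeterminates $y_{i_1},\ldots,y_{i_\ell}$ and define
\[
P(y_{i_1},\ldots,y_{i_\ell})=\Det\Big(\Big(\sum_{i\in[m]\setminus I}A_ia_i+\sum_{i\in I}A_iy_i\Big)[R,C]\Big).
\]
Each entry of this matrix is affine-linear in the $y$'s, so $P$ has total degree at most $k+1\le n$. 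Evaluating at $y_{i_j}=\lambda_{i_j}$ gives a nonzero value, so $P$ is a nonzero polynomial.

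\textbf{Key step.} Every $(\lambda'_{i_1},\ldots,\lambda'_{i_\ell})\in\mfI_{\vec a,I}$ is a zero of $P$. At such a tuple the full matrix has rank exactly $k$, so every $(k+1)$-minor vanishes, in particular the $[R,C]$ one. Hence $\mfI_{\vec a,I}\subseteq Z(P)\cap S^\ell$.

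\textbf{Counting.} By \Cref{SZ}, applied with uniform sampling from $S^\ell$,
\[
|Z(P)\cap S^\ell|\le \frac{\deg P}{|S|}\cdot |S|^\ell\le n\cdot n^{c(\ell-1)}=n^{c\ell-c+1},
\]
which is exactly the claimed bound.

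The main obstacle is only bookkeeping: ensuring $\deg P\le n$ (the minor has size at most $n$) and that $P\not\equiv 0$. Both follow directly from the choice of $R,C$ via the witness tuple.
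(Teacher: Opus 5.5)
Your proof is correct and follows the paper's argument. Both build a nonzero determinantal polynomial of degree at most $n$ in the $\ell$ perturbed variables, note that it vanishes on all of $\mfI_{\vec a,I}$, and bound its roots in $S^\ell$ by Schwartz--Zippel. The only cosmetic difference is the choice of minor: you take a $(k+1)\times(k+1)$ nonsingular minor, while the paper takes one of maximum size. Either one vanishes whenever the rank returns to $k$.
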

\begin{proof}
	Let $A'=\sum_{i\in[m]}A_ia_i, A''=\sum_{i\in[m]\setminus I}A_ia_i+\sum_{i\in I}A_i\lambda_i$.
	Let $X,Y\subseteq[n]$ be of maximum possible size such that $A''[X,Y]$ is a square non-singular matrix, but $A'[X,Y]$ is singular. Let $$p(x_{i_1},\ldots,x_{i_{\ell}})=\Det\Big(\sum_{i\in[m]\setminus I}A_ia_i[X,Y]+\sum_{i\in I}A_ix_i[X,Y]\Big)$$ be a polynomial in the indeterminates $x_{i_1},\ldots,x_{i_\ell}$. Clearly $p$ is of degree at most $n$ (to be precise, the degree is at most $|X|\le n$), and $p(\lambda_{i_1},\ldots,\lambda_{i_\ell})\ne 0$. Therefore, from \Cref{SZ}, the number of roots of $p$ in $S^\ell$ is upper bounded by $|S^\ell|.(n/|S|)=n^{c\ell-c+1}$. Moreover, for any $(\lambda'_{i_1},\ldots,\lambda'_{i_\ell})\in \mfI_{\vec a,I}$, we have $p(\lambda'_{i_1},\ldots,\lambda'_{i_\ell})= 0$. Therefore, $|\mfI_{\vec a,I}|\le n^{c\ell-c+1}$.
\end{proof}
We shall approximate $\rank(\mcA)$ upto a factor of $(1-\eps)$ in $\CL$. We sample $x_1\gets a_1,\ldots,x_m\gets a_m$ from the catalytic tape. Either we already have landed up at an approximately good assignment $A'$ in which case we begin reconstructing the modified catalytic strings. Otherwise, we know that we can find $I=(i_1,\ldots,i_\ell)$ and a perturbation to $x_{i_1},\ldots,x_{i_\ell}$ for which the rank increases. But then, the above lemma implies that $|\mfI_{\vec a,I}|$ is smaller than $|S|^\ell$ by a polynomially large factor. Notice that $I\in \mfI_{\vec a,I}$, and therefore $I$ is the $j^{th}$ tuple in $\mfI_{\vec a,I}$ for some $j$ that requires a small number of bits to represent. Therefore, we forget $a_{i_1},\ldots,a_{i_\ell}$ from $\vec a$ and remember instead $I$, $j$ and the rank of $A'$. We can later reconstruct $a_{i_1},\ldots,a_{i_\ell}$ by finding the $j^{th}$ tuple of $\mfI_{\vec a,I}$. Again, if we never find a suitable catalytic tape, then we have saved enough space to run \cite{BlaserJindalPandey2018} and find a $(1-\eps)$ approximation of the rank, and a required assignment. We now present the proof in detail:

\begin{algorithm}
	\caption{ProcessAssignment$(\mcC,\mcA)$}\label{alga1}
	\begin{algorithmic}[1]
		\State Given is $\mcC=(a_1,\ldots,a_m)$ where $a_i\in S\forall i\in[m]$.
		\State $A'\gets \sum_{i\in[m]}A_ia_i,k\gets \rank A'$
		\State $flag\gets 0$
		\For {$I=\{i_1,\ldots,i_\ell\}\in \binom{[m]}\ell, \lambda_{i_1},\ldots,\lambda_{i_\ell}\in S$}
		\State $A''\gets \sum_{i\in[m]\setminus I}A_ia_i+\sum_{i\in I}A_i\lambda_i$\Comment{$I=\{i_1,\ldots,i_\ell\}$}
		\If {$\rank A''>\rank A'$}
		\State $flag\gets 1$
		\EndIf
		\EndFor
		\If {$flag=0$}
		\State \Return $A'$
		\EndIf
		\State $j\gets 0$
		\For {$I=\{i_1,\ldots,i_\ell\}\in \binom{[m]}\ell, \lambda_{i_1},\ldots,\lambda_{i_\ell}\in S$}
		\State $A''\gets \sum_{i\in[m]\setminus I}A_ia_i+\sum_{i\in I}A_i\lambda_i$
		\If {$\rank A''=k$}
		\State $j\gets j+1$
		\EndIf
		\If{$(\lambda_{i_1}=a_{i_1})\wedge\ldots\wedge (\lambda_{i_\ell}=a_{i_\ell})$}
		\State Reset $\mcC\gets  (i_1\ldots,i_\ell,j,k,\vec a^{-I})$\Comment{$\vec a^{-I}$ is $\{a_1\ldots,a_m\}\setminus\{a_{i_1},\ldots,a_{i_\ell}\}$}
		\State \Return \Comment{Terminate the algorithm}
		\EndIf
		\EndFor
	\end{algorithmic}
\end{algorithm}

\begin{algorithm}
	\caption{RestoreAssignment$(\mcC,\mcA)$}\label{alga2}
	\begin{algorithmic}[1]
		\State Input: We are given $\mcC=(i_1\ldots,i_\ell,j,k,\vec a^{-I})$. 
		\State $j'=0$
		\For {$ \lambda_{i_1},\ldots,\lambda_{i_\ell}\in S$}
		\State $A''\gets \sum_{i\in[m]\setminus I}A_ia_i+\sum_{i\in I}A_i\lambda_i$
		\If {$\rank A''=k$}
		\State $j'\gets j'+1$
		\EndIf
		\If{$j=j'$}
		\State $a_{i_1}\gets\lambda_{i_1},\ldots,a_{i_\ell}\gets\lambda_{i_\ell}$
		\State Reset $\mcC\gets  (a_1,\ldots,a_m)$
		\State \Return \Comment{Terminate the algorithm}
		\EndIf
		\EndFor
		
	\end{algorithmic}
\end{algorithm}
\begin{theorem}\label{thm:edm}
	For any constant $\eps\in(0,1)$, and large enough field $\bbF$ such that $|\bbF|\ge n^{\mcO(1/\eps)}$, suppose we are given an instance of Edmond's problem, $\mcA=\langle A_1,\ldots , A_m\rangle $. Then, we have a $\CL\P$ algorithm that returns a matrix $A\in\mcA$ such that $\rank A\ge (1-\eps)\cdot \rank (\mcA)$.
\end{theorem}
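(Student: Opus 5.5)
We follow the compress-or-compute template of \Cref{thm:mm} and \Cref{thm:rank}, with \Cref{alga1} and \Cref{alga2} as the per-block subroutines. Fix $\ell=\ceil{\frac1\eps-1}$, which is a constant, and choose $c$ large enough in terms of $\ell$; concretely $c\ge 10\ell$ suffices, which is where $|\bbF|\ge n^{\mcO(1/\eps)}$ comes from. Split the catalytic tape into blocks $(\mcC_1,\ldots,\mcC_N,\mcB)$ with $N=\poly(n)$. Each $\mcC_t=(a_1,\ldots,a_m)$ consists of $m$ elements of $S$, each written in $c\log n$ bits. The block $\mcB$ is reserved for rank computations, which are in $\TC^1\subseteq\CL$ by \Cref{rmk:remark1}. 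We process the blocks sequentially.

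For block $\mcC_t$, set $A'=\sum_i A_ia_i$. Since $|\binom{[m]}{\ell}\times S^\ell|=\poly(n)$, we can enumerate all pairs $(I,\vec\lambda)$ in logspace and polynomial time, and test whether some choice raises the rank. If none does, \Cref{thm:ed} gives $\rank A'\ge(1-\eps)\rank(\mcA)$. We then output $A'$ and jump to restoration of $\mcC_1,\ldots,\mcC_{t-1}$. Otherwise, the first witnessing $I$ (in canonical order) satisfies the hypothesis of \Cref{bad}, so $|\mfI_{\vec a,I}|\le n^{c\ell-c+1}$. Moreover $(a_{i_1},\ldots,a_{i_\ell})\in\mfI_{\vec a,I}$ trivially. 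We therefore replace the $\ell$ values $a_{i_1},\ldots,a_{i_\ell}$, which take $c\ell\log n$ bits, by the following data:
\begin{itemize}
\item the set $I$, in $\ell\log m$ bits;
\item the rank $k$, in $\log n$ bits;
\item the index $j\le n^{c\ell-c+1}$ of $(a_{i_1},\ldots,a_{i_\ell})$ in the canonical enumeration of $\mfI_{\vec a,I}$, in $(c\ell-c+1)\log n$ bits.
\end{itemize}
The net saving is at least $(c-2-2\ell)\log n-\mcO(1)\ge\log n$ bits once $c$ is large.

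\Cref{alga2} restores the block, which we justify in two points.
\begin{itemize}
\item The entries $\vec a^{-I}$ are untouched, so $\mfI_{\vec a,I}$ depends only on $\vec a^{-I}$, $I$ and $k$, all of which are stored. Re-enumerating $S^\ell$ in the same order and returning the $j$-th tuple of rank $k$ therefore recovers the original values.
\item In \Cref{alga1}, the counter $j$ must count only tuples of rank exactly $k$, and the test at the original tuple must happen after incrementing. This keeps processing and restoration consistent.
\end{itemize}

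If all $N$ blocks are compressed, we have freed at least $N\log n$ bits. We shift the compressed blocks to make this space contiguous. For $N$ a large enough polynomial this is enough to run the polynomial-time PTAS of \cite{BlaserJindalPandey2018} in free space. That algorithm iterates \Cref{thm:ed}'s local improvement at most $n$ times, so it runs in polynomial time and polynomial space. We output its matrix and then restore all blocks in order. The blocks are independent, so each is restored within logspace.

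The main obstacle is the bookkeeping of sizes and canonical orderings. Elements of $S$ must have a fixed-length encoding, and the block must be re-laid-out so that offsets stay computable in logspace. The counting in \Cref{alga1} and \Cref{alga2} must use identical enumeration orders over $S^\ell$ for the fixed $I$. Finally, $c$ must exceed roughly $2\ell+3$ so that each compression genuinely frees $\Omega(\log n)$ bits. Polynomial running time is immediate, since each block costs $\poly(n)\cdot\binom{m}{\ell}|S|^\ell$ rank computations. Since $\CL\P=\CL\cap\P$, this establishes \Cref{thm:edm}.
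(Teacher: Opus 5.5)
Your proposal is correct and follows the same compress-or-compute argument as the paper. You certify a sampled block via the $\ell$-local-optimality criterion of \Cref{thm:ed}; otherwise you replace the $\ell$ witnessed values by $(I,j,k)$ using the Schwartz--Zippel bound of \Cref{bad}, and if every block compresses you run \cite{BlaserJindalPandey2018} in the freed space. Your insistence on compressing with the first \emph{witnessing} $I$, and counting $j$ only within that $I$, is more careful than \Cref{alga1} as printed: that pseudocode returns at the first $I$ in the enumeration whether or not it witnesses a rank increase, and in that case no space would be saved.
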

\begin{proof}
	Let $S$ be the first $n^{c}$ elements of $\bbF$ that therefore require $c\log n$ bits to store. We shall throughout the algorithm, work with $S$. Fix $\ell=\ceil{1/\eps-1},N=n^3$ and $c=2\ell+3$. Assume that the catalytic tape is divided into blocks $(C_1,\ldots,C_N,\mcB)$ where $C_t=(a_1,\ldots,a_m)\forall t\in[N]$ and each $a_i\in S$ takes up $c\log n$ bits of storage. $\mcB$ is again for rank computations. Let $A=A_1x_1+\ldots A_mx_m$ where $x_1,\ldots ,x_m$ are variables. The algorithm proceeds as follows: We iterate over all $\mcC_1,\ldots,\mcC_N$ as follows, suppose that we are processing $\mcC_t=(a_1,\ldots,a_m)$:
	\begin{enumerate}
		\item We Execute \Cref{alga1}. Let $A'$ be the evaluation of $A$ given by $\mcC_t$ and $k$ be its rank i.e. by putting $x_i\gets a_i\forall i\in[m]$. If for no perturbation in $S$ for any $\ell$ variables, gives a matrix of larger rank than $A'$ then we have found $A'$ which approximates the rank of $A$ upto a factor of $(1-\eps)$ by \Cref{thm:ed}. Thus we jump to Step $4$ to reconstruct the previous catalytic blocks by setting $index=t-1$. On the other hand, suppose for some $I=(i_1,\ldots,i_\ell)\in\binom {[m]}\ell$, and some perturbation $\lambda_{i_1},\ldots,\lambda_{i_{\ell}}$, the rank of $A'$ increases. Then, from \Cref{bad} we know that $\mfI_{\vec a,I}$ is of size at most $n^{c\ell-c+1}$. Therefore, we enumerate over all $\lambda_{i_1},\ldots,\lambda_{i_\ell}\in S$ and check if $(\lambda_{i_1},\ldots,\lambda_{i_\ell})\in\mfI_{\vec a,I}$. This way, we find $j$ such that $(a_{i_1},\ldots,a_{i_\ell})$ is the $j^{th}$ tuple in $\mfI_{\vec a,I}$. Finally, we reset $\mcC\gets  (i_1\ldots,i_\ell,j,k,\vec a^{-I})$ where $\vec a^{-I}$ is $\{a_1\ldots,a_m\}\setminus\{a_{i_1},\ldots,a_{i_\ell}\}$. Notice that the index $j$ can be written with memory only $(c\ell-c+1)\log n$. Therefore, in this process, we have freed up $cm\log n-\ell\log n-(c\ell-c+1)\log n-\log n-c(m-\ell)\log n=(c\ell-\ell-c\ell+c-1-1)\log n=\mcO(\log n)$ since we have chosen $c=2\ell+3$.

		\item Suppose we just processed $\mcC_N$ and still did not find a required assignment. In this case, we have freed up at least $\mcO(n^3\log n)$ space for $N=n^3$. We shift the catalytic strings to make this space contiguous, and run \cite{BlaserJindalPandey2018} algorithm to find an assignment $A'$ achieves the $(1-\eps)$ approximation for $\rank(\mcA)$. Next set $index=N$ and go to step $4$.
		\item In either case, we now recompute the modified catalytic strings $\mcC_1,\ldots,\mcC_{index}$. For each $t\in[index]$, we execute \Cref{alga2} with $\mcC_t=(i_1\ldots,i_\ell,j,k,\vec a^{-I})$ as follows:
		
		We wish to compute the $j^{th}$ tuple from the set $\mfI_{\vec a,I}$. Notice that $$\mfI_{\vec a,I}=\Big\{\lambda_{i_1},\ldots,\lambda_{i_\ell}\in S:\rank \Big(\sum_{i\in[m]\setminus I}A_ia_i+\sum_{i\in I}A_i\lambda_i\Big)=k\Big\}.$$ Therefore we enumerate over all $\lambda_{i_1},\ldots,\lambda_{i_\ell}\in S$ until we find the $j^{th}$ tuple $(a_{i_1},\ldots,a_{i_\ell})\in\mfI_{\vec a,I}$. Finally we reset $\mcC\gets  (a_1,\ldots,a_m)$.
	\end{enumerate}
	The above algorithm clearly runs in polynomial time $\poly(n^\ell)$, and the catalytic space used is $\mcO(c\cdot mn^2\log n+|\mcB|)=\poly(n,\ell)$ with work space $\mcO(\ell \log n)$, and the catalytic tape is always restored when the algorithm halts. Thus the above gives the required $\CL\P$ algorithm that finds a matrix $A\in\mcA$ such that $\rank A\ge (1-\eps)\cdot \rank (\mcA)$.
\end{proof}
\begin{corollary}
	Given an instance of Linear Matroid Matching $M$, we have a $\CL\P$ algorithm that approximates the size of a maximum independent matching of $M$ upto a factor of $(1-\eps)$.
\end{corollary}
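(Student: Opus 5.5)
The corollary should follow by reducing linear matroid matching to an instance of Edmonds' problem, and then invoking \Cref{thm:edm}. I would use Lovász's linear-algebraic formulation. Let the matroid be represented by vectors $b_v\in\bbF^r$ for $v$ in the ground set, and let $F$ be the edge set of the graph. For each edge $e=\{u,v\}\in F$ set
\[
A_e=b_u\wedge b_v=b_ub_v^T-b_vb_u^T,
\]
which is a skew-symmetric $r\times r$ matrix. Now consider the matrix space $\mcA=\langle A_e:e\in F\rangle$.

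Lovász's theorem \cite{Lovasz1979} says that the commutative rank of $\mcA$ equals twice the maximum size $\nu$ of an independent matching. This holds for the formulation in which pairs correspond to edges. If the instance is phrased in graph form with shared vertices, we first make standard copies of each vertex per incident edge, adding the corresponding parallel vectors, so that it becomes a parity instance. This step is logspace.

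So the plan has three steps:
\begin{enumerate}
\item Compute the matrices $A_e$ in logspace from the representation.
\item Run the $\CL\P$ algorithm of \Cref{thm:edm}, with $\eps$ fixed, to obtain a matrix $A\in\mcA$ with $\rank A\ge(1-\eps)\rank(\mcA)=(1-\eps)2\nu$. Here the field is taken large enough, extending it if necessary as in \Cref{thm:edm}.
\item Output $\rank A/2$. This rank is computable in $\CL$ since $\TC^1\subseteq\CL$.
\end{enumerate}

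We also need the upper bound $\rank A/2\le \nu$. Every $A$ in $\mcA$ has $\rank A\le\rank(\mcA)=2\nu$, because the commutative rank is the maximum rank over the space. Combining this with step 2, the output lies in $[(1-\eps)\nu,\nu]$. Composition works because the reduction is logspace and $\CL\P$ is closed under logspace preprocessing and postprocessing: we recompute entries of the $A_e$ on demand.

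The main obstacle is making sure that Lovász's rank identity applies in the exact form needed. This means checking that the commutative rank of the span of the wedges $b_u\wedge b_v$ equals $2\nu$ over a sufficiently large field, and that the reduction from graph-based matroid matching to parity preserves $\nu$. I would cite the identity from \cite{Lovasz1979} rather than reprove it. The rest is a direct application of \Cref{thm:edm}.
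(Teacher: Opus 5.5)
Your proposal is correct and is essentially the paper's argument: the paper forms $ATA^T$, with $T$ the Tutte matrix of the graph on the ground set and $A$ the representation matrix, which expands to exactly $\sum_{\{u,v\}\in F} x_{uv}(b_ub_v^T-b_vb_u^T)$ (your span of wedges); it cites Murota for $\rank(ATA^T)=2\nu$ and then applies \Cref{thm:edm}. Your vertex-copying justification of the rank identity via Lov\'asz's parity theorem, and the explicit output step placing $\rank A/2$ in $[(1-\eps)\nu,\nu]$, only spell out details the paper leaves implicit.
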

\begin{proof}
	Let $V$ be the ground set of the matroid and $T$ be the indeterminate Tutte matrix corresponding to the underlying graph on $V$ given by the instance $M$. Let $A$ be the linear representation matrix of $M$. Then \cite{murota2000matrices} shows that the size of the maximum independent matching in $M$ is exactly $1/2$ times $\rank (ATA^T)$. Therefore, \Cref{thm:edm} gives the desired result.
\end{proof}




\bibliographystyle{alpha}
\bibliography{biblio}

\end{document}